\definecolor{mygreen}{RGB}{28,172,0} 
\definecolor{mylilas}{RGB}{170,55,241}
\newcommand{\defeq}{\vcentcolon=}
\newcommand{\tr}[1]{ \mathrm{tr}\left(#1\right) }
\newcommand{\norm}[1]{\left\lVert #1\right\rVert}
\newtheorem{Theorem}{Theorem}
\newtheorem*{Theorem*}{Theorem}
\newtheorem{Lemma}{Lemma}
\newtheorem*{Lemma*}{Lemma}
\newtheorem{Definition}{Definition}
\newtheorem*{Definition*}{Definition}
\begin{document}
   \begin{center}
       \vspace*{1cm}

       \large{\textbf{Multipartite Entanglement Detection
via Correlation Minor Norm}}

       \vspace{0.5cm}


       \small{\textbf{Rain Lenny$^1$, Amit Te’eni$^1$, Bar Y. Peled$^1$, Avishy Carmi$^2$ and Eliahu Cohen$^1$}}
       
        \vspace{0.5cm}
        
    \small{$^{1}$Faculty of Engineering and the Institute of Nanotechnology and Advanced Materials, Bar Ilan University, Ramat Gan, 5290002, Israel\\
$^{2}$Center for Quantum Information Science and Technology and Faculty of Engineering Sciences, Ben-Gurion University of the Negev, Beersheba, 8410501, Israel}
       
   \vspace{1cm}
   
    \textbf{Abstract}
    
Entanglement is a uniquely quantum resource giving rise to many quantum technologies. It is therefore important to detect and characterize entangled states, but this is known to be a challenging task, especially for multipartite mixed states. The correlation minor norm (CMN) was recently suggested as a bipartite entanglement detector employing bounds on the quantum correlation matrix. In this paper we explore generalizations of the CMN to multipartite systems based on matricizations of the correlation tensor. It is shown that the CMN is able to detect and differentiate classes of multipartite entangled states. We further analyze the correlations within the reduced density matrices and show their significance for entanglement detection. Finally, we employ matricizations of the correlation tensor for introducing a measure of global quantum discord.
   \end{center}




\section{Introduction}

In applications of quantum mechanics, entangled particles are often used to create correlations stronger than classically possible. Theoretically, determining if entanglement exists in an arbitrary system represented by a density operator, is a notoriously difficult problem. Practically, determining whether entanglement is present in an experimental setup can prove to be even more challenging, considering the needed precision as well as the existence of noise. Moreover, trying to characterize and use entanglement in many-body quantum systems raises several questions: how strong is the entanglement after it is produced? Is the state genuinely entangled (or just bi-separable for instance)? Can one easily measure the strength of entanglement? Are their additional helpful quantities characterizing the nonlocality of the system?

In trying to answer these questions, many entanglement detection schemes were suggested. Notable examples include the Peres-Horodecki criterion \cite{peres1996separability,HORODECKI19961}, the computable cross-norm or realignment (CCNR) criterion \cite{chen2002matrix,rudolph2005further}, entanglement witnesses    \cite{HORODECKI19961} and de Vicente's criterion  \cite{vicente2007separability} (for more information on the basics of entanglement detection, see Ref. \cite{guhne2009entanglement}). Among them, the correlation minor norm \cite{peled2021correlation} (CMN), based on a previous more general construction of the same authors \cite{carmi2019relativistic}, was presented as a bipartite entanglement detection scheme using the correlation matrix, a tool heavily used in signal processing which encodes important information about entanglement in quantum systems \cite{badziag2008experimentally,de2011multipartite}.

Considering that entanglement detection in mixed bipartite states can present a challenge, detection of entanglement in mixed multipartite systems is even more arduous. It is the goal of the present work to suggest and analyze a multipartite entanglement detection scheme based on the CMN. In our work we construct the correlation tensor for an arbitrary finite-dimensional mixed state and explore its properties in order to create an entanglement detection criterion. Different types of high order entanglement are discussed, as well as higher order quantum discord.  

\section{Preliminaries}
\subsection{construction of the correlation tensor}
Let remote parties $A,B,...,N$ share a quantum system in the following Hilbert space: $\mathcal{H}_A \otimes \mathcal{H}_B \otimes...\otimes \mathcal{H}_N$. Denote $d_A \defeq \dim(\mathcal{H}_A), d_B \defeq \dim(\mathcal{H}_B)$ and $d_N \defeq \dim(\mathcal{H}_N)$. For every party there always exists an orthonormal basis of $d_A \times d_A$ Hermitian operators: $\{A_i\}_{i=1}^{d_A^2}$, which sustains the following normalization: $\tr{A_iA_j} = \delta_{ij}$ (same for the other parties).

The (cross-)correlation tensor of the system, denoted by $\mathcal{C}$, is defined by:
\begin{equation}
    \mathcal{C}^{i,j..,n} \defeq \braket{A_i \otimes B_j \otimes...\otimes N_n} = \tr{\rho (A_i \otimes B_j \otimes...\otimes N_n)},
\end{equation}
where $\rho$ is the density matrix shared by all the systems. Each entry in the tensor describes the cross-correlation between the corresponding observables. Note that the correlation tensor is real.

\subsection{Quantum states and entanglement}

 For our discussion, entanglement is the lack of separability. A quantum state $\rho$ is fully-separable if and only if:
 \begin{equation}
  \rho=\sum_{i}{p_i{\ \rho}_i^{(A)}\otimes\rho_i^{(B)}\otimes\ldots{\otimes\rho}_i^{(N)}}.
 \end{equation}

 A multipartite system is bi-separable under a specific bi-partition if and only if the bi-partition can be written as:
 \begin{equation}
  \rho=\sum_{i}{p_i{\ \rho}_i^{(A)}\otimes\rho_i^{(B)}}.
 \end{equation}

A state is genuinely entangled if and only if it is not separable under any partition. Furthermore, separability in all partitions does not guarantee full separability of the system  \cite{dur2000classification}.

\subsection{Bipartite CMN}

\begin{Definition}
For a bipartite system, the CMN is the following scalar function  \cite{peled2021correlation}: 
\begin{equation}\label{M_h_p}
\mathcal{M}_{h,p} = \left( \sum_{R \in \binom{ \left[ d^2 \right] }{h} } \prod_{k \in R} \left[ \sigma_k \left( \mathcal{C} \right) \right]^p \right)^{1/p},
\end{equation}
where $ d = \min \left\{ d_A, d_B \right\} $, $\binom{ \left[ d^2 \right] }{h}$ is all combinations (subsets) of size $h$ from the set of $[d^2] = \{1,\ldots,d^2\}$ (as in the Cauchy–Binet formula), and $ \sigma_k \left( \mathcal{C} \right) $ denotes the $k$-th singular value of $\mathcal{C}$. For $h=d^2$ the CMN is the product of the singular values, regardless of $p$, and will be denoted by $\mathcal{M}_{h=d^2}$. Furthermore, if $d_A=d_B$ we get $\mathcal{M}_{h=d^2}=\det(\mathcal{C})$. 
\end{Definition}
The CMN is comprised of the Schatten $p$-norm of the $h$-th compound matrix of the correlation matrix. Calculation of the CMN is amount to a full tomography of the state. 

It can be shown that for a bipartite system, there exists a positive number $\mathcal{B}(d_A,d_B,h,p)$, for all $h$ and $p$, for which:
\begin{equation}\label{CMN_generic_bound}
\mathcal{M}_{h,p} \leq \mathcal{B}(d_A,d_B,h,p),
\end{equation}
if the state is separable, which gives us an entanglement detection criterion (see Fig~\ref{separable_bi_fig}). $\mathcal{B}$ can be calculated for $p=1$ and $p = \infty$ (Theorems 1 and 3 in \cite{peled2021correlation}), and is shown to be a tight bound for those values of $p$ (Theorems 2 and 4 in \cite{peled2021correlation}).

\begin{figure}[h]
\includegraphics[scale=0.3]{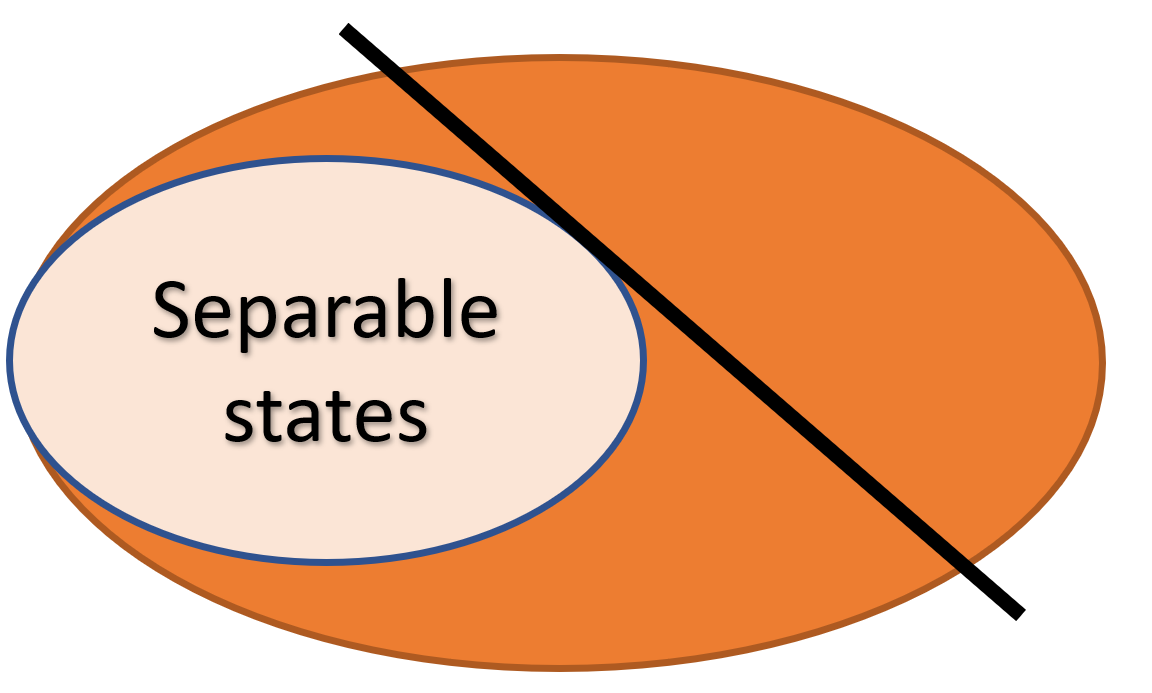}
\caption{{\bf Visualization of the CMN entanglement detection scheme.} The black ``line'' represents $\mathcal{B}$, so that states located to its right will be detected as entangled, Eq. \eqref{CMN_generic_bound}.}
\centering
\label{separable_bi_fig}
\end{figure}

\section{CMN bounds for the detection of multipartite entanglement}
\subsection{The CMN function for multipartite states}
 For example, consider a three-party correlation tensor $\mathcal{C}^{ijk}$, which represents the correlations $\braket{A_i \otimes B_j \otimes C_k}$. The tensor slices are (see Figure~\ref{tensor_slices}):
\begin{equation}\label{tensor_slices_matrix}
\mathcal{C}^{ijk}=
 \begin{bmatrix}
A_1B_1C_1 & A_1B_2C_1\\
A_2B_1C_1 & A_2B_2C_1 

\end{bmatrix} ,
\begin{bmatrix}
A_1B_1C_2 & A_1B_2C_2 \\
A_2B_1C_2 & A_2B_2C_2 

\end{bmatrix}.
\end{equation}
The tensor can be flattened over its $C$ mode, yielding the matrix:
\begin{equation}
\mathcal{C}^{i/jk}=
 \begin{bmatrix}
A_1B_1C_1 & A_1B_2C_1 & A_1B_1C_2 & A_1B_2C_2 \\
A_2B_1C_1 & A_2B_2C_1 & A_2B_1C_2 & A_2B_2C_2.
\end{bmatrix}  .
\end{equation}
\begin{Lemma}
$\mathcal{C}^{i/jk}$ is the matrix representing the correlations between $A$ and $(B\otimes C)$. Indicating that tensor flattening is a means to realise a state partition. This holds true for higher order correlation tensors.
\end{Lemma}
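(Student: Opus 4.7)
The plan is to show that flattening the correlation tensor over a subset of modes is the same as grouping the corresponding parties into one composite party whose local operator basis is the tensor product of the original bases. Concretely, I will verify that $\{B_j\otimes C_k\}_{j,k}$ is an orthonormal Hermitian operator basis of $\mathcal{H}_B\otimes\mathcal{H}_C$ under the Hilbert--Schmidt inner product, and that the bipartite correlation matrix of $\rho$ with respect to the bases $\{A_i\}$ (for $A$) and $\{B_j\otimes C_k\}$ (for the composite $BC$) coincides entrywise with $\mathcal{C}^{i/jk}$.

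First I would check Hermiticity: since $B_j$ and $C_k$ are Hermitian, so is $B_j\otimes C_k$. Orthonormality then follows from the multiplicativity of the trace on tensor products,
\begin{equation}
\tr{(B_j\otimes C_k)(B_{j'}\otimes C_{k'})}=\tr{B_j B_{j'}}\,\tr{C_k C_{k'}}=\delta_{jj'}\delta_{kk'},
\end{equation}
so $\{B_j\otimes C_k\}$ is indeed an orthonormal Hermitian basis of operators on $\mathcal{H}_B\otimes\mathcal{H}_C$, of the kind required by the construction of the correlation tensor in the preliminaries. Consequently, treating $BC$ as a single party of dimension $d_B d_C$, the bipartite correlation matrix for the cut $A|BC$ has entries
\begin{equation}
\mathcal{C}^{i,(jk)}=\tr{\rho\,(A_i\otimes (B_j\otimes C_k))}=\mathcal{C}^{ijk},
\end{equation}
indexed by the row $i$ and the composite column index $(j,k)$. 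The definition of the flattening $\mathcal{C}^{i/jk}$ in Eq.~(6) uses exactly this lexicographic ordering of $(j,k)$, so the two matrices are literally equal.

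For the generalization to higher-order tensors, I would proceed by induction on the number of parties merged: if $\{X^{(1)}_{i_1}\},\ldots,\{X^{(m)}_{i_m}\}$ are orthonormal Hermitian bases on $\mathcal{H}_{X_1},\ldots,\mathcal{H}_{X_m}$, then $\{X^{(1)}_{i_1}\otimes\cdots\otimes X^{(m)}_{i_m}\}$ is such a basis on $\mathcal{H}_{X_1}\otimes\cdots\otimes\mathcal{H}_{X_m}$ by the same trace identity, and flattening several modes simultaneously amounts to indexing the columns (or rows) of the resulting matrix by the corresponding composite index. Hence any partition of the parties into two groups induces a bipartite correlation matrix that is exactly the corresponding matricization of $\mathcal{C}$.

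The argument is essentially bookkeeping, and the only real care needed is in matching the ordering of the multi-index $(j,k,\ldots)$ used in the flattening with the ordering of the tensor-product basis used on the composite side; once that convention is fixed consistently, the identification is immediate. No subtle inequalities or optimization arise, so I do not expect a substantive obstacle beyond keeping the index conventions straight.
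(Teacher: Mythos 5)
The paper states this lemma without providing any proof, and your argument is the natural one and is correct: $\{B_j\otimes C_k\}$ is a set of $d_B^2 d_C^2$ orthonormal Hermitian operators in a real space of dimension $(d_B d_C)^2$, hence a basis of the required kind, and the resulting bipartite correlation matrix for the cut $A|BC$ coincides entrywise with the flattening. Your closing caveat about index ordering is also harmless for the paper's purposes: a different ordering of the composite index only permutes columns, which leaves the singular values, and therefore the CMN, unchanged.
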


Meaning that in order to calculate the CMN for a multipartite state, we consider all possible matricizations of the correlation tensor:
\begin{equation}\label{M_h_p}
\mathcal{M}_{h,p} = \left( \sum_{R \in \binom{ \left[ d^2 \right] }{h} } \prod_{k \in R} \left[ \sigma_k \left( \mathcal{C}_{flat} \right) \right]^p \right)^{1/p},
\end{equation}
wherein $\mathcal{C}_{flat}$ denotes any matricization of the correlation tensor.

\begin{figure}[h]
\includegraphics[scale=0.5]{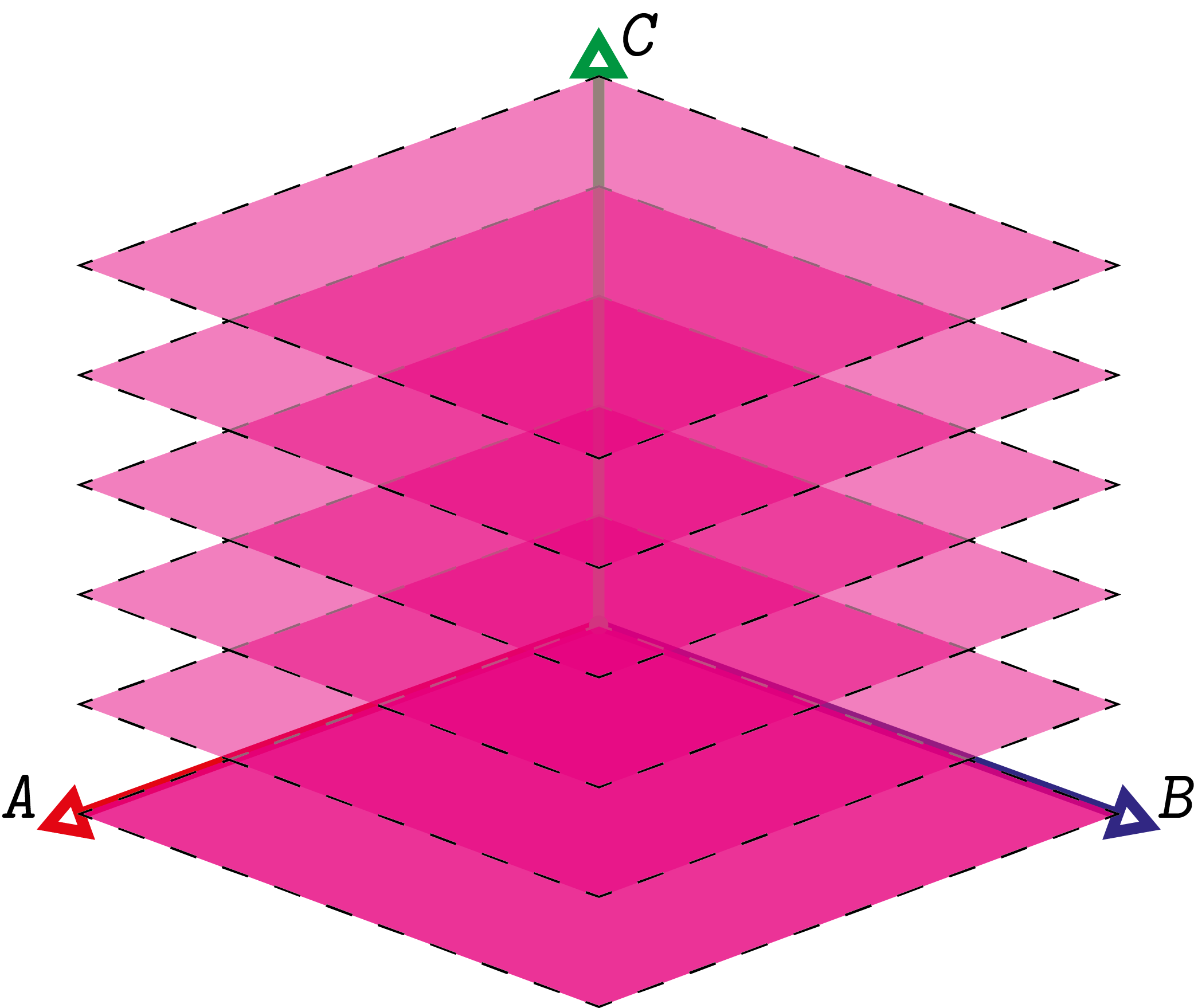}
\caption{{\bf Illustration of the tripartite correlation tensor.} The tensor slices in Eq. (\eqref{tensor_slices_matrix}) are represented by color.}
\centering
\label{tensor_slices}
\end{figure}

\subsection{Bounds on bi-separable states}\label{CMN_bi_bounds_section}

Let us assume the parties choose their orthonormal observables such that: $ A_1 = \frac{1}{\sqrt{d_A}} \mathbbm{1}_{d_A}$ (same for the other parties), i.e. the trivial measurements. This implies that all the other observables: $\{A_i\}_{i=1}^{d_A^2}$,  are traceless (because the observables are orthonormal, each inner product with $\mathbbm{1}$ should yield 0). Given this assumption, we are motivated to define the following:

\begin{Definition}
 A state in Filter normal form (FNF) is a state in which any single-party traceless observable has vanishing expectation values.  In the bipartite case, which may be a multipartite state under a bi-partition, this becomes: $  \braket{ A_i \otimes \mathbbm{1} }  =  \braket{\mathbbm{1} \otimes B_j } = 0$.
\end{Definition}

\begin{Lemma}\label{lemma_FNF}
A mixed multipartite state can be brought into a normal
form by stochastic local operations and classical communication (SLOCC), where the normal
form has all local-density operators proportional to the identity and is unique up to local unitaries \cite{verstraete2003normal} (Theorem 4).
\end{Lemma}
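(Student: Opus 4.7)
The plan is to obtain the FNF via a variational principle on the SLOCC orbit of $\rho$. First I would parameterize a local filter by invertible operators $X_1,\dots,X_N$ acting as $\rho\mapsto(X_1\otimes\cdots\otimes X_N)\,\rho\,(X_1\otimes\cdots\otimes X_N)^{\dagger}/\mathcal{N}$, and restrict to the submanifold $\det X_k=1$ for each $k$, which quotients out the irrelevant overall scalar that drops out after renormalization. On this constrained manifold I would then consider the functional
\begin{equation}
F(X_1,\dots,X_N)=\mathrm{tr}\!\left[(X_1\otimes\cdots\otimes X_N)\,\rho\,(X_1\otimes\cdots\otimes X_N)^{\dagger}\right],
\end{equation}
whose infimum, when attained, is the candidate FNF representative of the orbit.

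Next, computing the variation of $F$ with respect to a perturbation $X_k\to X_k+\epsilon\,\delta X_k$ that preserves $\det X_k=1$, i.e.\ with $\mathrm{tr}(X_k^{-1}\delta X_k)=0$, and using cyclicity of the trace, one sees that the Euler--Lagrange equations at a critical point read $\rho_k\propto\mathbbm{1}_{d_k}$ for every party $k$, where $\rho_k$ is the reduced state of the filtered $\rho$ on party $k$. Hence any critical point automatically yields a state in FNF, and conversely an FNF state is a critical point of $F$ on its own orbit.

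For existence I would show that $F$ is bounded below on the constrained orbit and that the infimum is attained. For full-rank $\rho$ this is reasonably direct, because $F$ blows up along any divergent sequence of normalized $X_k$'s, so the infimum lies in the interior of the group. For general mixed $\rho$ the clean statement is the Kempf--Ness-type correspondence invoked in Ref.~\cite{verstraete2003normal}: the infimum is attained precisely when $\rho$ belongs to a \emph{semistable} SLOCC orbit, and for such $\rho$ the minimizer exists and delivers the FNF. For uniqueness, suppose $X$ and $Y$ both bring $\rho$ to FNFs $\sigma$ and $\tau$. Then $\tau=Z\sigma Z^{\dagger}/\mathcal{N}'$ with $Z=\bigotimes_k Y_k X_k^{-1}$; writing the polar decomposition $Z_k=U_k H_k$ with $H_k$ positive definite, the identity-reduction property of \emph{both} $\sigma$ and $\tau$, combined with the critical-point equation above, forces $H_k\propto\mathbbm{1}_{d_k}$, so $Z_k$ is unitary up to a scalar.

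The hard part will be the attainment step: the local SLOCC group is non-compact, so a priori a minimizing sequence $X^{(n)}$ can escape to infinity while the filtered state only asymptotically approaches an FNF rather than reaching one. Controlling this escape carefully --- equivalently, identifying exactly which orbits are semistable and therefore genuinely reach their normal form --- is the technical heart of the argument and is what the referenced theorem of Verstraete \emph{et al.} establishes.
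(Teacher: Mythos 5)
The paper offers no proof of this lemma at all --- it is stated purely as a citation of Theorem 4 of Verstraete, Dehaene and De Moor, so there is no ``paper's own proof'' to compare against. Your variational sketch is a faithful reconstruction of the strategy actually used in that reference: minimize $\mathrm{tr}\left[(X_1\otimes\cdots\otimes X_N)\rho(X_1\otimes\cdots\otimes X_N)^{\dagger}\right]$ over determinant-one local filters, derive $\rho_k\propto\mathbbm{1}_{d_k}$ as the critical-point (Euler--Lagrange) condition, and handle uniqueness via the Kempf--Ness correspondence. The first-variation computation is correct: for $M=\delta X_k\,X_k^{-1}$ traceless, stationarity requires $\mathrm{tr}(M\sigma_k)=0$ for all such $M$, forcing the filtered reduced state $\sigma_k$ to be proportional to the identity.

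Two points keep this from being a complete proof. First, the attainment step that you explicitly defer is not a technicality you can wave at --- it is the content of the theorem, and moreover it changes the statement: for non-semistable orbits the infimum is \emph{not} attained, the filtered state only converges to the normal form, and that limit can be the zero operator. So the lemma as worded here (``can be brought into a normal form'') is already slightly stronger than what is true without a full-rank or semistability hypothesis; your sketch correctly flags this but does not resolve it, and neither does the paper. Second, your uniqueness argument (``the identity-reduction property of both $\sigma$ and $\tau$ forces $H_k\propto\mathbbm{1}_{d_k}$'') is asserted rather than derived; making it rigorous requires the geodesic convexity of the trace functional along one-parameter subgroups $e^{tH_k}$, which is what guarantees the minimizer is unique up to the compact (unitary) stabilizer. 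With those two ingredients supplied --- both of which are exactly what the cited theorem provides --- your outline is the right proof.
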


For a multipartitie system, bi-partitioned into parties $A$ and $B$, the bipartite CMN bounds \cite{peled2021correlation} can be used for bi-separability detection (see Figure~\ref{separable_tri_fig}): 

\begin{Theorem}\label{bi-sep-1}
	For a a bi-separable state under the bi-partition $A/B$ in FNF, and $h \geq \sqrt{d_Ad_B}$. The following holds true:
	\begin{equation}\label{cmn_tri_bound_inf}
	\mathcal{M}_{h,p=\infty} = \prod_{k=0}^{h-1}\sigma_k(\mathcal{C}) \leq \frac{1}{\sqrt{d_Ad_B}} \left[ \frac{d_A-1}{d_A \left(h-1\right)} \frac{d_B-1}{d_B \left(h-1\right)} \right]^{\frac{h-1}{2}} .
	\end{equation}
	Note that $h \leq d^2$ , $d \defeq \min \left\{ d_A, d_B \right\}$ 
\end{Theorem}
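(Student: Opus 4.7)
The plan is to reduce the multipartite bi-separability statement to a purely bipartite claim and then invoke the existing $p=\infty$ CMN bound of Ref.~\cite{peled2021correlation} (their Theorem~3). By Lemma~1, flattening the correlation tensor across the cut $A/B$ yields precisely the correlation matrix of the effective bipartite system $A\otimes B$, where each ``party'' is the tensor product of all the original parties on its side of the cut, with local dimensions $d_A$ and $d_B$ (products of the original local dimensions). Bi-separability of $\rho$ across $A/B$ means $\rho=\sum_i p_i\,\rho_i^{(A)}\otimes\rho_i^{(B)}$, which is literally a separable state for this effective bipartite system. The FNF hypothesis persists through the regrouping, since the trivial observables on $A$ and on $B$ are tensor products of the original trivial observables, so $\langle A_i\otimes\mathbb{1}\rangle=\langle\mathbb{1}\otimes B_j\rangle=0$ for all nontrivial indices.

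Next I would observe that the singular values of $\mathcal{C}_{flat}$ are exactly the singular values used in the bipartite CMN in the $A/B$ grouping. Consequently $\mathcal{M}_{h,p=\infty}$ as defined in Eq.~(7) coincides with the bipartite CMN $\mathcal{M}_{h,\infty}$ for this effective bipartite separable FNF state, whose singular values live in dimension at most $d^2$ with $d=\min\{d_A,d_B\}$. Since the hypothesis $h\geq\sqrt{d_Ad_B}\geq d$ places $h$ in the regime covered by Theorem~3 of \cite{peled2021correlation}, that theorem applies directly and yields
\begin{equation*}
\prod_{k=0}^{h-1}\sigma_k(\mathcal{C}_{flat})\leq\frac{1}{\sqrt{d_Ad_B}}\left[\frac{d_A-1}{d_A(h-1)}\,\frac{d_B-1}{d_B(h-1)}\right]^{\frac{h-1}{2}},
\end{equation*}
which is exactly the claimed inequality. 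Lemma~\ref{lemma_FNF} ensures that the FNF assumption can always be arranged by SLOCC without changing the separability class, so no loss of generality is incurred.

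The main obstacle, and the point that most deserves careful checking, is step~1: verifying that grouping several parties into a single ``super-party'' really does send the multipartite flattened correlation tensor to the bipartite correlation matrix in the sense required by the bipartite CMN definition. Concretely, one must confirm that the set of operators $\{A_i\otimes A'_j\otimes\cdots\}$ indexed by the composite side of the cut forms an orthonormal Hermitian basis with the correct normalization $\mathrm{tr}(\cdot)=\delta_{\cdot\cdot}$, and that the trivial element $\frac{1}{\sqrt{d_A}}\mathbb{1}$ of this composite basis coincides with the tensor product of the trivial local elements, so that FNF is preserved under the regrouping. Once this bookkeeping is in place, the theorem follows immediately from the bipartite result without any new estimates.
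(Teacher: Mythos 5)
Your proposal is correct and follows essentially the same route as the paper, which offers no separate proof of this theorem beyond invoking Lemma~1 (flattening the correlation tensor realizes the bi-partition) and then applying the bipartite $p=\infty$ bound of Theorem~3 in \cite{peled2021correlation} to the resulting effective two-party separable state in FNF. The bookkeeping you flag in your final step — that the regrouped product observables form an orthonormal Hermitian basis whose trivial element is the tensor product of the local trivial elements — is exactly the content the paper leaves implicit, so nothing further is needed.
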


\begin{Theorem}\label{bi-sep-2}
	Assume $D \defeq \max \left\{ d_A, d_B \right\},d \defeq \min \left\{ d_A, d_B \right\} ,D\leq d^3 $, and $h >1$. Then, for any bi-separable state under the bi-partition $A/B$ in FNF:
	\begin{equation}\label{cmn_tri_bound_1}
	\mathcal{ M}_{h,p=1} = S_h \left( \sigma_1, \ldots, \sigma_{d^2} \right) \leq S_h \left( \alpha, \frac{\beta}{d^2-1}, \ldots, \frac{\beta}{d^2-1} \right)
	\end{equation}
	where $ \alpha \defeq 1/\sqrt{d_Ad_B} $, $ \beta \defeq \sqrt{ \frac{d_A-1}{d_A} \frac{d_B-1}{d_B}} $, and $S_h$ is the $h$-th elementary symmetric polynomial in $d^2$ variables:
	
	\begin{equation}
S_h \left( x_1, \ldots, x_{d^2} \right) \defeq \sum_{R \in \binom{[d^2]}{h}} \prod_{k \in R} x_k ,
\end{equation}
\end{Theorem}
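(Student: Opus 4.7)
The plan is to exploit the fact that in FNF the correlation matrix splits into a $1 \times 1$ block equal to $\alpha$ and a residual block whose nuclear norm is at most $\beta$, and then to invoke Schur-concavity of the elementary symmetric polynomials on the nonnegative orthant.

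First I would observe that with $A_1 = \mathbbm{1}_{d_A}/\sqrt{d_A}$, $B_1 = \mathbbm{1}_{d_B}/\sqrt{d_B}$ and the FNF conditions $\braket{A_i \otimes \mathbbm{1}} = \braket{\mathbbm{1} \otimes B_j} = 0$ for $i,j \geq 2$, every entry of $\mathcal{C}$ with a ``1'' index vanishes except $\mathcal{C}^{1,1} = 1/\sqrt{d_Ad_B} = \alpha$. Hence $\mathcal{C}$ is block-diagonal with a scalar block $\alpha$ and a $(d_A^2-1) \times (d_B^2-1)$ residual block $\mathcal{C}'$; after zero-padding, the $d^2$ singular values of $\mathcal{C}$ are $\alpha$ together with the singular values $\tilde{\sigma}_1, \ldots, \tilde{\sigma}_{d^2-1}$ of $\mathcal{C}'$.

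Next I would bound the nuclear norm of $\mathcal{C}'$. Writing $\rho = \sum_i p_i \rho_i^A \otimes \rho_i^B$ gives $\mathcal{C}' = \sum_i p_i \vec{a}_i'(\vec{b}_i')^T$, where $\vec{a}_i' = (\tr{\rho_i^A A_j})_{j \geq 2}$ and similarly for $\vec{b}_i'$. The purity bound $\tr{(\rho_i^A)^2} \leq 1$ together with the fixed value $\tr{\rho_i^A A_1} = 1/\sqrt{d_A}$ yields $\|\vec{a}_i'\|_2^2 \leq (d_A-1)/d_A$, and likewise for $\vec{b}_i'$. The triangle inequality for the Schatten-1 norm then gives $\sum_{k=1}^{d^2-1}\tilde{\sigma}_k = \|\mathcal{C}'\|_1 \leq \sum_i p_i \|\vec{a}_i'\|_2 \|\vec{b}_i'\|_2 \leq \beta$. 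The hypothesis $D \leq d^3$ should enter at this stage to reconcile the dimensional mismatch between $\mathcal{C}'$ and the $d^2-1$ slots into which $\beta$ is being distributed.

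Finally, I would combine these ingredients via the recursion $S_h(\alpha, \tilde{\sigma}_1, \ldots, \tilde{\sigma}_{d^2-1}) = \alpha\,S_{h-1}(\tilde{\sigma}_1, \ldots, \tilde{\sigma}_{d^2-1}) + S_h(\tilde{\sigma}_1, \ldots, \tilde{\sigma}_{d^2-1})$ together with two standard facts about $S_j$ on the nonnegative orthant: it is monotonically nondecreasing in each argument, and it is Schur-concave. Monotonicity lets one saturate $\sum \tilde{\sigma}_k = \beta$, and Schur-concavity (equivalently, Maclaurin's inequality) then shows the maximum of $S_j$ over the simplex with this fixed sum is attained at the uniform vector $(\beta/(d^2-1), \ldots, \beta/(d^2-1))$. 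Applying this in the recursion to both $S_{h-1}$ and $S_h$ and recombining reassembles $S_h(\alpha, \beta/(d^2-1), \ldots, \beta/(d^2-1))$, as required. The main obstacle will be invoking Schur-concavity of $S_j$ cleanly and pinning down the exact role of the hypothesis $D \leq d^3$—I suspect the latter ensures an ordering such as $\alpha \geq \beta/(d^2-1)$ that justifies singling out $\alpha$ in the recursion, rather than hiding a deeper positivity constraint on $\mathcal{C}'$.
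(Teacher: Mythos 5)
Your proposal is correct and takes essentially the same route as the paper: the FNF block structure isolating the singular value $\alpha$, the nuclear-norm bound $\norm{\mathcal{C}'}_1 \leq \beta$ from the separable decomposition and purity, and then the recursion $S_h(\alpha,\tilde{\sigma}) = \alpha S_{h-1}(\tilde{\sigma}) + S_h(\tilde{\sigma})$ combined with majorization by the uniform vector, Schur concavity and monotonicity of $S_j$ — this is exactly the argument the paper gives for the analogous Theorem~\ref{mult_full_2} (the bi-separable case itself is deferred to Theorem~3 of the original CMN paper). The one point to correct is your reading of $D \leq d^3$: it plays no role in establishing the inequality — there is no dimensional mismatch, since $\mathcal{C}'$ has exactly $d^2-1$ singular values, and the recursion is a symmetric-function identity that needs no ordering between $\alpha$ and $\beta/(d^2-1)$ — rather, as Lemma~\ref{sat_lemma} states, that hypothesis is only needed for the \emph{tightness} of the bound, i.e.\ for the existence of a separable state saturating it.
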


If one of the bounds is violated, the state is entangled under that bi-partition, otherwise, the bounds yield no information.

\begin{Lemma}\label{sat_lemma}
Using Theorems 2 and 4 in \cite{peled2021correlation}, the bounds presented in this section are tight, meaning that no smaller bound can exist. The condition $h \geq \sqrt{d_Ad_B}$ in Theorem \ref{bi-sep-1}, and $D\leq d^3$ in Theorem \ref{bi-sep-1} are needed for the tightness of the bounds.
\end{Lemma}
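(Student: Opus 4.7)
The plan is to inherit tightness directly from the bipartite tightness results (Theorems 2 and 4 of \cite{peled2021correlation}) by exploiting the fact that, under a fixed bi-partition $A/B$, a multipartite state is effectively bipartite on the composite Hilbert spaces $\mathcal{H}_A$ and $\mathcal{H}_B$ of dimensions $d_A = \prod_{X \in A} d_X$ and $d_B = \prod_{X \in B} d_X$. First I would choose an orthonormal Hermitian basis on each composite side which is itself a tensor product of the single-party bases fixed in the preamble. Under this choice the matricization $\mathcal{C}_{flat}$ is identified, entry by entry, with the bipartite correlation matrix of the same state regarded on $\mathcal{H}_A \otimes \mathcal{H}_B$, so the multipartite $\mathcal{M}_{h,p}$ equals its bipartite counterpart. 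Since bi-separability under $A/B$ in the multipartite sense is exactly separability on $\mathcal{H}_A \otimes \mathcal{H}_B$, the bounds \eqref{cmn_tri_bound_inf} and \eqref{cmn_tri_bound_1} reduce to the ordinary bipartite bounds already analyzed in \cite{peled2021correlation}.

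With this reduction in place, I would invoke Theorem 2 of \cite{peled2021correlation} to extract, whenever $h \geq \sqrt{d_A d_B}$, an explicit separable bipartite state in FNF that saturates the $p=\infty$ bound, and Theorem 4 to do the same for the $p=1$ bound whenever $D \leq d^3$. Pulling these states back along the product-basis identification yields multipartite states that are bi-separable under $A/B$, remain in FNF (the one-party conditions $\braket{A_i \otimes \mathbbm{1}} = 0$ are special cases of the composite-side condition, obtained by taking $A_i$ of the form $A_{i_1} \otimes \mathbbm{1} \otimes \cdots$), and achieve equality in \eqref{cmn_tri_bound_inf} and \eqref{cmn_tri_bound_1} respectively. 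Necessity of the hypotheses $h \geq \sqrt{d_A d_B}$ and $D \leq d^3$ is inherited verbatim from \cite{peled2021correlation}: they are precisely the conditions guaranteeing that the extremal singular-value spectrum used in the bipartite saturating construction corresponds to a legitimate density operator.

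The main obstacle I anticipate is not technical depth but careful bookkeeping. One must confirm that the extremal bipartite states in \cite{peled2021correlation} can be written, possibly after a local unitary rotation on each composite side, using a product basis of observables, and that such a rotation does not alter the singular-value spectrum of the correlation matrix. Since singular values are invariant under orthogonal changes of the observable bases on either side, this last point is immediate, and the saturating multipartite states are obtained. The tightness statement, and its dependence on the two technical hypotheses, then follows.
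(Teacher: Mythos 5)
Your proposal is correct and follows essentially the same route as the paper: identify the matricized correlation tensor under the bi-partition $A/B$ with the ordinary bipartite correlation matrix on the composite Hilbert spaces (using the product observable basis), and then import the tightness statements of Theorems 2 and 4 of \cite{peled2021correlation} together with their saturating states. The paper adds only a concrete instance of this pull-back --- the SIC-POVM-with-Bell state $\rho_1$ of Sec.~\ref{bi_separable_sat} --- but the underlying argument is the one you give.
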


\begin{figure}[h]
\includegraphics[scale=0.3]{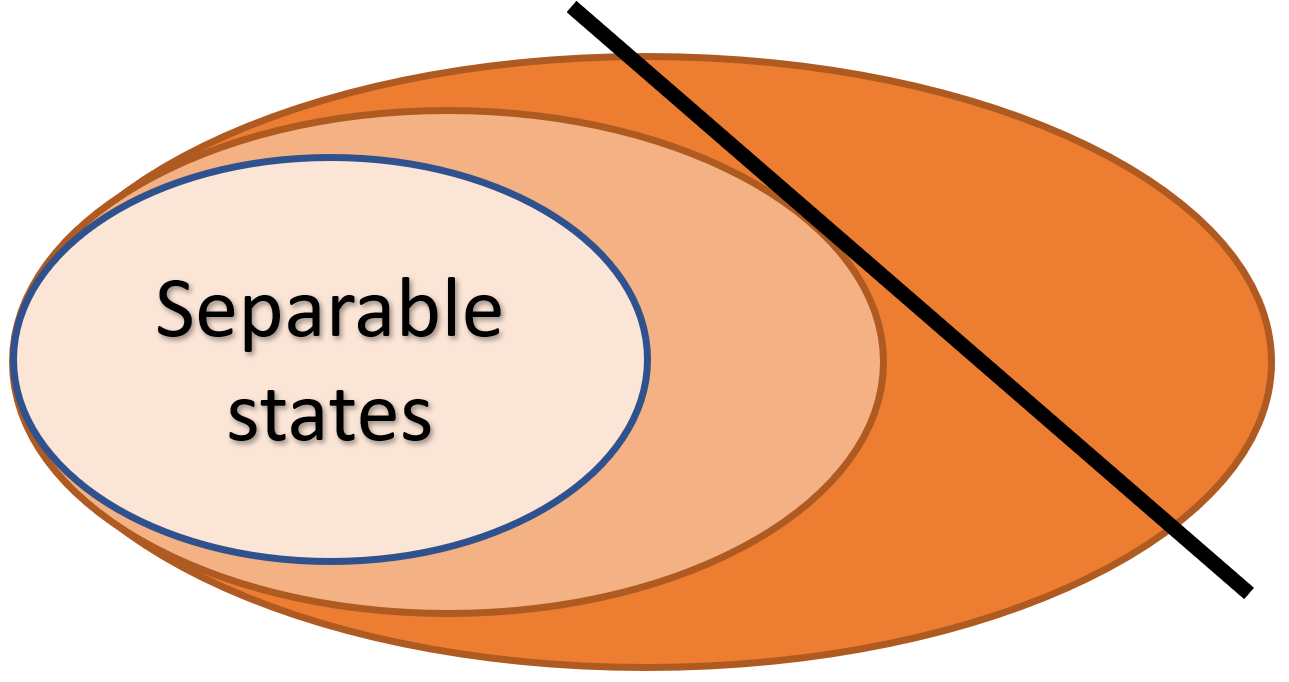}
\caption{{\bf Visualization of the CMN entanglement detection scheme for multipartite systems.} The black ``line" represents $\mathcal{B}$  where states right of the black will be detected as bi-entangled (for a specific bi-partition). The midway orange zone represents bi-separable states. Note that since
there exists a bi-separable state that saturates the bounds, the black line is tangent to the bi-separable states zone.}
\centering
\label{separable_tri_fig}
\end{figure}

\subsection{Symmetric bi-separable state which saturate the CMN bound}\label{bi_separable_sat}
In order to discuss our construction, we must first define the following:
\begin{Definition}
A symmetric, informationally complete, positive operator-valued measure (SIC-POVM) of dimension $d$ can be thought of as a group of $d^2$ pure states which are equally spaced (a regular, coherent, degree-$1$ quantum design with $r = 1$ and $d^2$ elements). For example, if $d=2$ the SIC-POVM state yields a tetrahedron on the Bloch sphere, as can be seen in Figure~\ref{SIC-POVM_figure}.
\end{Definition}

\begin{figure}[h]
\includegraphics[scale=0.5]{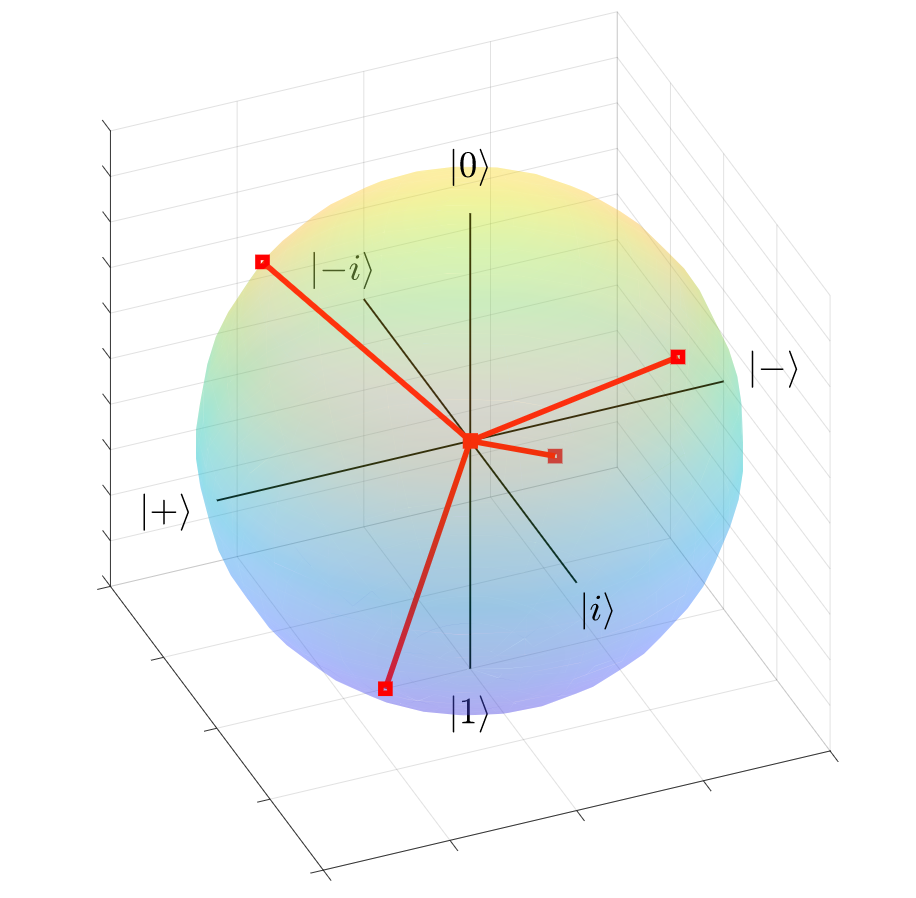}
\caption{{\bf SIC-POVM for $d=2$ creates a tetrahedron.} The 4 vectors are evenly spaced.}
\centering
\label{SIC-POVM_figure}
\end{figure}

Now, consider the following tripartite mixed state of three qubits:
\begin{equation}\label{sat_CMN_BI_state}
    \rho_1 = \sum_{i=1}^4 \rho^{SIC-POVM}_i\otimes \ket{ \psi^{Bell}_i} \bra{\psi^{Bell}_i},
\end{equation}

  \begin{equation}
    \rho^{SIC-POVM}_i = \frac{\mathbbm{1} + \Vec{r}_i \cdot \Vec{\sigma} }{2},
\end{equation}
with $\Vec{r}_i \in \mathbb{R}^3$ the vertices of a regular tetrahedron centered at the origin, rotated as seen in Figure \ref{SIC-POVM_figure}, where each vertex lies on the unit sphere. The states $\rho^{SIC-POVM}$ and $\ket{ \psi^{Bell}_i} \bra{\psi^{Bell}_i}$ are explicitly given in Appendix \ref{states_for_comparison}.

Assume that party A holds the SIC-POVM state and parties B and C share the Bell basis. Using Theorems 2 and 4 in \cite{peled2021correlation}, we know that this state saturates the CMN for the bi-partition $A/BC$ for $p=1,\infty$. One may calculate the correlation tensor (we have used the identity operator and the Pauli matrices as the observables) and see that it is diagonal, meaning that the correlation tensor is invariant under permutations. Thus, the state $\rho_1$ saturates the  CMN bound for bi-separable states \emph{for every bi-partition}. This notion is also true for the multipartite entanglement detection scheme presented by de Vicente and Huber (the multipartite dVH criterion) \cite{de2011multipartite}, as shown in Sec. \ref{compare_bell}, wherein we calculate the CMN and multipartite dVH bounds for $\rho_1$.

Permutation invariant states such as $\rho_1$ are rather prevalent in quantum mechanics, and are important for representation of many-body bosonic states, and security of quantum key distribution protocols \cite{leverrier2013security,sheridan2010finite}, for example. Thus, replicating the states presented here for higher dimensional $N$-party system systems is of interest, using higher dimensional SIC-POVM and Bell states \cite{ccorbaci2016construction,fujii2001generalized,karimipour2002quantum}.

\subsection{CMN bounds for fully-separable states}
In a similar manner to the above, let us define the following:
\begin{Definition}
 A state in strong FNF (SFNF) is a state in which any correlation involving the unity observable, excluding the main tensor vertex: $\braket{A_0\otimes B_0 \otimes \ldots \otimes N_0}$, is zero. See Figure~\ref{tensor_figure}. This is a state where the only non-zero correlations are between all the parties.
\end{Definition}
As far as our knowledge goes, a result such as Lemma \ref{lemma_FNF} does not exist for SFNF.

Now, we may bound the CMN for the fully-separable case, thus detecting any form of entanglement:

\begin{figure}[h]
\includegraphics[scale=0.25]{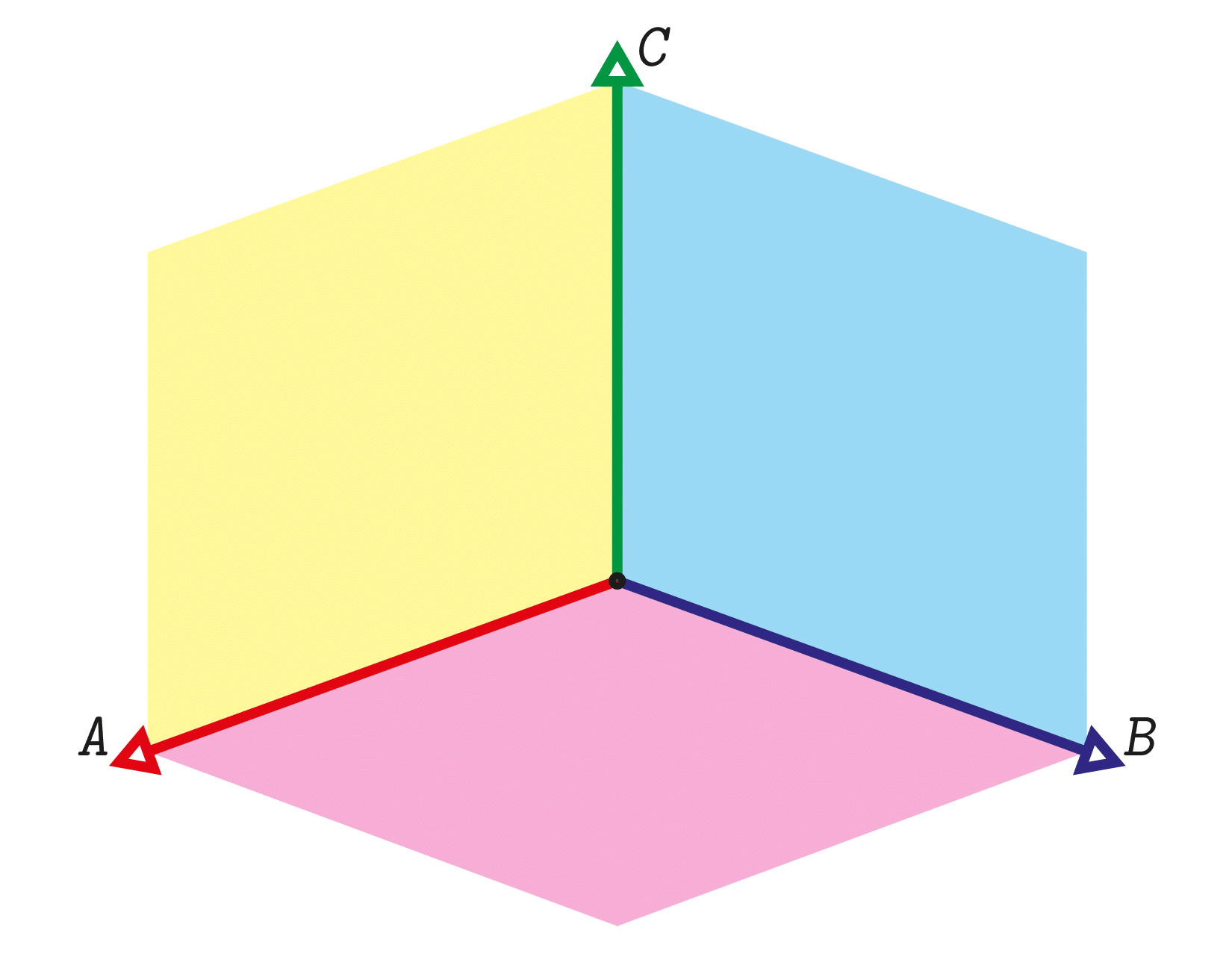}
\caption{{\bf Faces of the tripartite correlation tensor.} A tripartite state is said to be in SFNF, if the correlations on its faces are all zero - excluding the vertex at the origin, which has a fixed nonzero value. The
tensor faces are the bipartite correlation matrices of the reduces states (particle trace)}
\centering
\label{tensor_figure}
\end{figure}

 \begin{Theorem}\label{mult_full_1}
 For a matricized correlation tensor under any bi-partition $A/B$, assume $h \geq \prod_{i=1}^N\sqrt{d_i-1}$. Then, for a fully-separable state in SFNF, it holds that:
 \begin{equation}
\mathcal{M}_{h,p=\infty} = \prod_{k=0}^{h-1}\sigma_k(\mathcal{C}) \leq \alpha \left(\frac{\beta}{h-1} \right)^{h-1}.
 \end{equation}
 Where $\alpha = \prod_{i=1}^N\frac{1}{\sqrt{d_i}}\; , \;  \beta = \prod_{i=1}^N\sqrt{\frac{d_i-1}{d_i}}$. Note that $h \leq d^2$ , $d \defeq \min \left\{ d_A, d_B \right\}$ .
 \end{Theorem}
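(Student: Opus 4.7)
The plan is to follow the bipartite strategy of \cite{peled2021correlation} but sharpen the key Frobenius-type estimate using full separability rather than mere bi-separability, and to use the SFNF hypothesis to cleanly decouple the main-vertex contribution from the genuine $N$-party correlations.

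\emph{Step 1 --- block structure.} Fix the bi-partition $A/B$ with $d_A=\prod_{k\in A}d_k$ and $d_B=\prod_{k\in B}d_k$. By SFNF every entry of $\mathcal{C}$ in which some but not all of the indices vanish is zero. After reordering rows and columns, $\mathcal{C}_{flat}$ therefore becomes block-diagonal, consisting of a $1\times 1$ block carrying the main vertex $\mathcal{C}^{0\ldots 0}=\prod_k(1/\sqrt{d_k})=\alpha$, a block $\tilde{\mathcal{C}}$ of size $\prod_{k\in A}(d_k^2-1)\times\prod_{k\in B}(d_k^2-1)$ carrying the genuine $N$-party correlations, and a collection of zero rows and columns. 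The singular values of $\mathcal{C}_{flat}$ are thus $\{\alpha\}\cup\{\sigma_k(\tilde{\mathcal{C}})\}$.

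\emph{Step 2 --- Frobenius bound from full separability.} Writing $\rho=\sum_m p_m\bigotimes_k\rho^{(k)}_m$ and $a^{(m,k)}_i \defeq \tr{\rho^{(k)}_m A^{(k)}_i}$ yields $\mathcal{C}^{i_1\ldots i_N}=\sum_m p_m\prod_k a^{(m,k)}_{i_k}$. Convexity of $x\mapsto x^2$ together with $\sum_m p_m=1$ gives $\|\tilde{\mathcal{C}}\|_F^2\leq\sum_m p_m\prod_k\sum_{i\geq 1}(a^{(m,k)}_i)^2$. Since $a^{(m,k)}_0=1/\sqrt{d_k}$ and $\sum_i (a^{(m,k)}_i)^2=\tr{(\rho^{(k)}_m)^2}\leq 1$, each inner sum is at most $(d_k-1)/d_k$, hence
\begin{equation*}
  \|\tilde{\mathcal{C}}\|_F^2 \leq \prod_k\frac{d_k-1}{d_k}=\beta^2.
\end{equation*}
This is the single place where full separability enters: viewing the state only as bi-separable under $A/B$ would replace $\beta^2$ with the strictly larger $(d_A-1)(d_B-1)/(d_Ad_B)$.

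\emph{Step 3 --- AM-GM and the hypothesis on $h$.} Assuming $\alpha$ sits among the top $h$ singular values of $\mathcal{C}_{flat}$, the product of the top $h$ equals $\alpha\prod_{k=0}^{h-2}\sigma_k(\tilde{\mathcal{C}})$. Applying AM-GM to the $h-1$ squared singular values and inserting Step 2 gives $\prod_{k=0}^{h-2}\sigma_k(\tilde{\mathcal{C}})\leq(\beta/(h-1))^{h-1}$, which is the claimed bound. The role of the hypothesis $h\geq\prod_i\sqrt{d_i-1}=\beta/\alpha$ is exactly to certify this placement of $\alpha$, paralleling the way $h\geq\sqrt{d_Ad_B}$ functions in Theorem~\ref{bi-sep-1}.

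I expect Step 3 to be the main technical obstacle. A naive counting argument (if every one of the top $h$ singular values of $\tilde{\mathcal{C}}$ strictly exceeded $\alpha$, then $h\alpha^2<\beta^2$) only yields the sufficient condition $h>(\beta/\alpha)^2$, whereas the theorem states the weaker condition $h\geq\beta/\alpha$. Closing this gap will likely require either a weighted AM-GM or a Schur--majorisation refinement modelled on the bipartite proof of Theorem~1 in \cite{peled2021correlation}; the same refinement should simultaneously produce a saturating fully-separable SFNF state, in analogy with Lemma~\ref{sat_lemma}.
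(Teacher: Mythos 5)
Your overall architecture matches the paper's: isolate the extra singular value $\alpha$ coming from the main vertex (your Step 1 is the paper's Eq.~\eqref{dV_full_Eq}), bound a norm of the interior tensor using full separability, and finish with AM--GM. The genuine gap is in Step 2: you bound the \emph{Frobenius} norm, $\norm{\tilde{\mathcal{C}}}_F^2=\sum_k\sigma_k^2\leq\beta^2$, whereas the proof needs the \emph{nuclear} (trace-norm) bound $\sum_k\sigma_k\leq\beta$ --- this is exactly the multipartite dVH inequality, Eq.~\eqref{dV_full_sep}, which the paper invokes directly. The distinction is not cosmetic. With only $\sum_k\sigma_k^2\leq\beta^2$, AM--GM applied to the squared singular values gives $\prod_{k=0}^{h-2}\sigma_k\leq\beta^{h-1}/(h-1)^{(h-1)/2}$, which is weaker than the claimed $\beta^{h-1}/(h-1)^{h-1}$ by a factor of $(h-1)^{(h-1)/2}$; your Step 3 asserts the stronger conclusion, but your own estimates do not deliver it. Likewise, the placement of $\alpha$ among the top $h$ singular values follows from $\sigma_h\leq\beta/h$ (descending order plus $\sum_k\sigma_k\leq\beta$), which is precisely where the hypothesis $h\geq\beta/\alpha$ enters; from the Frobenius bound alone you only get $\sigma_h\leq\beta/\sqrt{h}$ and hence the condition $h\geq(\beta/\alpha)^2$ that you yourself flag as unresolved.

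Both defects disappear at once if you upgrade Step 2. Full separability gives the nuclear-norm bound by convexity of $\norm{\cdot}_1$ rather than Jensen on the entries: for each product term the interior correlation matrix under the flattening $A/B$ is the rank-one matrix $\vec{u}_m\vec{v}_m^{T}$ with $\vec{u}_m=\bigotimes_{k\in A}\vec{a}^{(m,k)}$ and $\vec{v}_m=\bigotimes_{k\in B}\vec{a}^{(m,k)}$, whose single nonzero singular value is $\prod_k\norm{\vec{a}^{(m,k)}}\leq\prod_k\sqrt{(d_k-1)/d_k}=\beta$; the triangle inequality over the convex combination then gives $\sum_k\sigma_k(\tilde{\mathcal{C}})\leq\beta$. (This is where full separability, as opposed to bi-separability, genuinely enters --- your closing remark in Step 2 is the right instinct, attached to the wrong norm.) With that bound, plain AM--GM on the $h-1$ singular values themselves yields the stated $(\beta/(h-1))^{h-1}$, and no majorisation refinement is needed: the condition $h\geq\prod_i\sqrt{d_i-1}=\beta/\alpha$ is exactly what the nuclear-norm pigeonhole argument requires, so the ``gap'' you anticipate in Step 3 is an artefact of the Frobenius estimate, not a real obstacle.
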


\begin{Theorem}\label{mult_full_2}
For a matricized correlation tensor under any bi-partition $A/B$,
assume $D \defeq \max \left\{ d_A , d_B \right\},d \defeq \min \left\{ d_A , d_B \right\}$. Then, for any fully-separable state in SFNF:
 \begin{equation}
 \mathcal{M}_{h,p=1} = S_h \left( \sigma_1, \ldots, \sigma_{d^2} \right) \leq  S_h \left(\alpha, \frac{\beta}{d^2-1}, \ldots, \frac{\beta}{d^2-1} \right)
 \end{equation}
 	Where $\alpha = \prod_{i=1}^N\frac{1}{\sqrt{d_i}}\; , \;  \beta = \prod_{i=1}^N\sqrt{\frac{d_i-1}{d_i}}$, and $S_h$ is the $h$-th elementary symmetric polynomial in $d^2$ variables, shown earlier.
\end{Theorem}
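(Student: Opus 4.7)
The plan is to adapt the proof of Theorem~2 of Ref.~\cite{peled2021correlation} to the multipartite setting by exploiting the tensor-product structure of the separable decomposition, and carry it out in four steps.

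To begin, I would expand the fully-separable $\rho = \sum_k p_k \bigotimes_{j=1}^N \rho_k^{(j)}$ in the orthonormal Hermitian bases, $\rho_k^{(j)} = \sum_i c_{k,i}^{(j)} A_i^{(j)}$, so that the correlation tensor factorizes entrywise as $\mathcal{C}^{i_1,\ldots,i_N} = \sum_k p_k \prod_j c_{k,i_j}^{(j)}$. Normalization forces $c_{k,0}^{(j)} = 1/\sqrt{d_j}$, while purity $\tr{(\rho_k^{(j)})^2} \leq 1$ yields $\sum_{i \neq 0}(c_{k,i}^{(j)})^2 \leq (d_j-1)/d_j$ for every $k$ and $j$.

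Next, I would use the SFNF assumption to simplify any matricization $\mathcal{C}_{flat}$ under a bi-partition $A/B$. Since an entry whose multi-index has some but not all components equal to $0$ must vanish, a permutation of rows and columns brings $\mathcal{C}_{flat}$ into the block form $\diag(\alpha, M, 0)$: the $1\times 1$ block is the main vertex $\alpha$, and $M$ collects the entries whose multi-index has every component nonzero. The factorization from step~1 writes $M = \sum_k p_k\, u_k v_k^T$ with $u_k = \bigotimes_{j \in A} \tilde{c}_k^{(j)}$ and $v_k = \bigotimes_{j \in B} \tilde{c}_k^{(j)}$, where the tildes denote restriction to $i \neq 0$. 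Multiplying the purity bound across parties gives $\|u_k\|\|v_k\| \leq \beta$, so by the triangle inequality for the Schatten-1 norm, $\sum_i \sigma_i(M) \leq \sum_k p_k \|u_k\|\|v_k\| \leq \beta$. A short induction on $|A|$ also confirms $\prod_{j \in A}(d_j^2-1) \leq d^2 - 1$, so $M$ contributes at most $d^2-1$ nonzero singular values; together with $\alpha$ this accounts for all $d^2$ singular values of $\mathcal{C}_{flat}$ (padded by zeros if needed).

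Finally, I would write $\mathcal{M}_{h,p=1} = S_h(\alpha, \sigma_1(M), \ldots, \sigma_{d^2-1}(M)) = \alpha\, S_{h-1}(\sigma(M)) + S_h(\sigma(M))$ and maximize over $\sigma(M) \geq 0$ subject to $\sum_i \sigma_i(M) \leq \beta$. By Maclaurin's inequality, each elementary symmetric polynomial in nonnegative variables is bounded above by its value at equal coordinates with the $\ell^1$ constraint saturated, namely at $\sigma_i(M) = \beta/(d^2-1)$; this exactly reproduces the claimed bound $S_h(\alpha, \beta/(d^2-1), \ldots, \beta/(d^2-1))$. The main obstacle is step~2: for arbitrary $N$, arbitrary (possibly unbalanced) bi-partitions, and unequal local dimensions $d_j$, one must verify carefully that every ``mixed-index'' row and column really vanishes under SFNF, so that the clean block-diagonal structure genuinely survives and no extra singular values creep in beyond $\alpha$ and those of $M$.
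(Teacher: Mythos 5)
Your proposal is correct and follows essentially the same route as the paper's proof: isolate the main-vertex singular value $\alpha$, expand $S_h\left(\alpha,\sigma_1,\ldots,\sigma_{d^2-1}\right)=\alpha S_{h-1}\left(\sigma\right)+S_h\left(\sigma\right)$, bound the sum of the remaining singular values by $\beta$, and maximize the symmetric polynomial at equal coordinates. The only differences are at the level of ingredients: the paper imports the bound $\sum_k\sigma_k\left(\mathcal{W}_{flat}\right)\leq\beta$ directly from the multipartite dVH criterion rather than re-deriving it from the separable decomposition, purity, and trace-norm subadditivity as you do, and it closes with majorization plus Schur concavity of $S_h$ where you invoke Maclaurin's inequality together with monotonicity --- the two arguments are equivalent. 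The ``main obstacle'' you flag is not a real gap: under SFNF every row or column of $\mathcal{C}_{flat}$ whose multi-index contains a zero component (other than the all-zero one) vanishes identically, which is precisely the paper's observation that $\sigma\left(\mathcal{C}_{flat}\right)$ equals $\sigma\left(\mathcal{W}_{flat}\right)$ augmented by the single extra value $\alpha$.
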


Recall that the CMN is calculated using tensor matricization, and unlike in the bi-separable case, all matricizations must be checked. Even if only one of the matricizations breaks the bound, the state is not fully-separable.

Note that for a bipartite state, the theorems presented here are equivalent to the bi-separable theorems presented in Sec. \ref{CMN_bi_bounds_section}.

The proofs to the bounds appearing in this section can be found in Appendix \ref{proofs_fully_bounds}.

\section{Relation to other multipartite entanglement detection schemes}

We wish to compare our results to the multipartite dVH criterion \cite{de2011multipartite}, being the logical thing to do as the CMN bounds were derived from the bipartite \cite{vicente2007separability} and multipartite dVH criterion. 

Under the normalization of operators presented in this work, if a state is fully-separable the multipartite dVH criterion states that:
\begin{equation}\label{dV_full_sep}
    \sum_i\sigma(\mathcal{W}_{flat}) \leq \prod_{j=1}^N\sqrt{\frac{d_j-1}{d_j}},
\end{equation}
for every matricization. and for a three-qubit bi-separable state (for a specific bi-partition):
\begin{equation}\label{dV_bi_sep}
    \sum_i\sigma(\mathcal{W}_{flat}) \leq \sqrt{\frac{3}{8}}
\end{equation}
Wherein $\mathcal{W}$ is the ``interior'' correlation tensor, meaning that every correlation involving the unity 
observable is removed, as can be seen in Figure \ref{tensor_interior}. This is in fact the tensor of non-zero correlations for a state on SFNF.

\begin{figure}[h]
\includegraphics[scale=0.45]{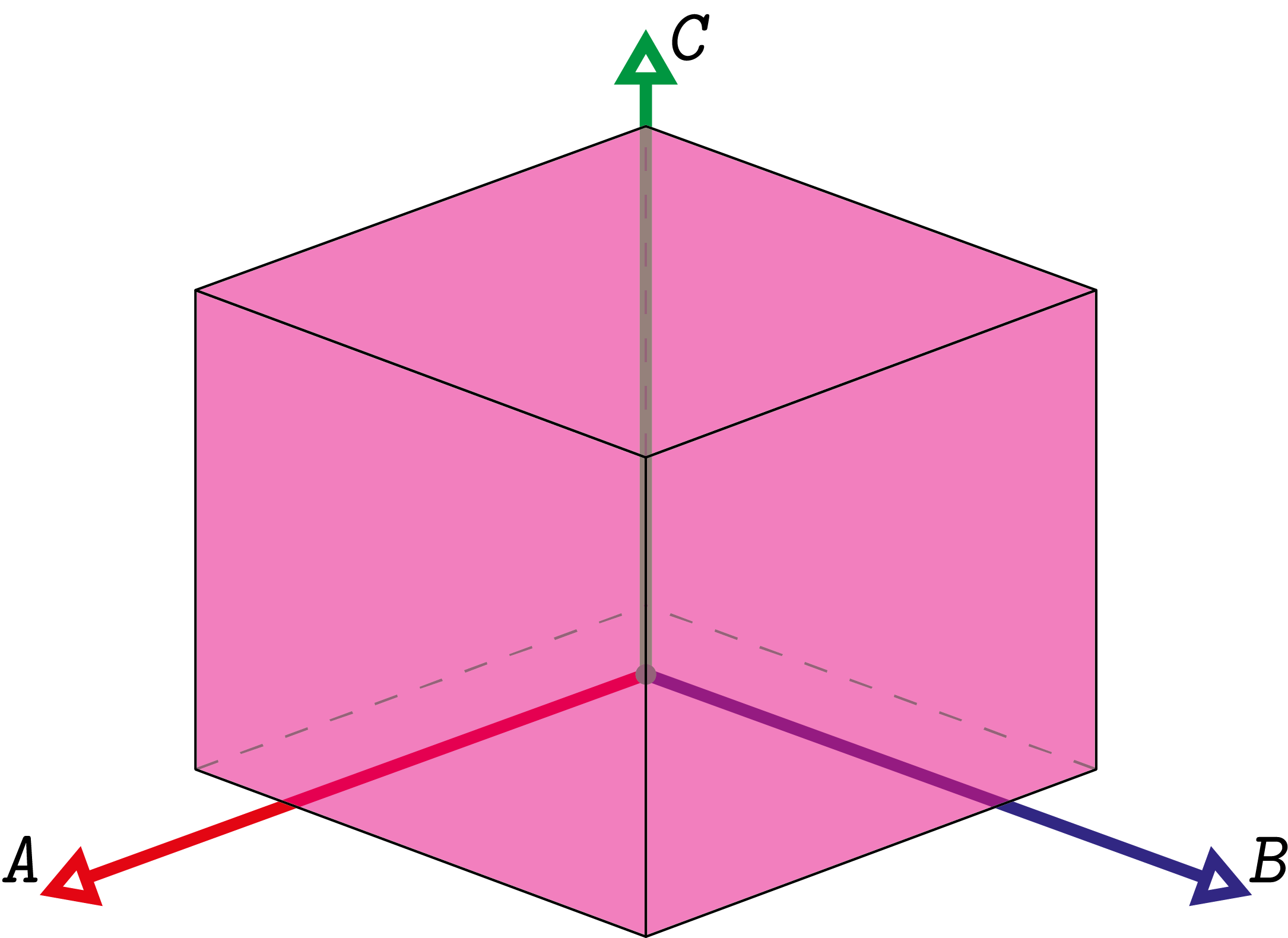}

\caption{{\bf Removing the faces of the tripartite correlation tensor.} This leaves us with the ``interior'' correlation tensor used in the dVH multipartite entanglement criterion.}
\centering
\label{tensor_interior}
\end{figure}

Under SFNF assumptions, for fully-separable states the CMN and multipartite dVH criterion are equivalent. If a state does not hold SFNF, the CMN is not applicable while the multipartite dVH is.
For bi-separable states, the CMN detection power may be better, due to the fact that it disregards less correlations, as will be shown in the following sub-sections.

\subsection{SIC-POVM with Bell}\label{compare_bell}
Consider three parties sharing the state presented in Sec. \ref{bi_separable_sat}. For this state we have shown that each matricization is the same because the correlation tensor is symmetric.
\begin{itemize}
\item 
\textbf{Multipartite dVH criterion}: Using Eq. \ref{dV_full_sep}, the criterion detects that the state is not fully separable:$\sqrt{3/8} \nleq 2^{-1.5}$. Furthermore, the bound for bi-separable states is saturated, as we receive an equality in Eq. \ref{dV_bi_sep}.
\item
\textbf{CMN criterion}: Using Theorem \ref{mult_full_1} and assuming that $h=d^2$, the CMN is the product of all singular values and we detect that the state is not fully separable: $1/(64\cdot 3 \sqrt{3}) \nleq 1/1728.
$

Using Theorem \ref{bi-sep-1} and assuming that $h=d^2$, the CMN is the product of all singular values and saturation of the bound is received, as was shown is Sec. \ref{bi_separable_sat}.

Using the CMN with $p=1$ would give the same entanglement detection power; Theorem \ref{mult_full_2} detects that the state is not fully-separable: $1/(8\sqrt{3}) \nleq 1/24$, and we get saturation of the bound in Theorem \ref{bi-sep-1}.

\end{itemize}

For the ``SIC-POVM with Bell" state ($\rho_1$), the multipartite dVH and the CMN are equivalent in entanglement detection power.
\subsection{breaking the dVH bound using the CMN}

In the multipartite dVH criterion, all unity correlation are disregarded, as was previously discussed. Those values can hold important correlations that may turn the tide on detecting entanglement. Thus, consider the tripartite qubit state $\rho_2$, given explicitly in Appendix \ref{states_for_comparison}.

\begin{itemize}
\item 
\textbf{multipartite dVH criterion}: Using Eq. \ref{dV_full_sep}, the criterion detects that the state is not fully separable: $0.4982 \nleq 2^{-1.5}$. But due to the fact that the multipartite dVH criterion disregards any correlation involving the unity observable, Eq. \ref{dV_bi_sep}, which is the bound for bi-separable states, yields no information: $0.4982 \leq \sqrt{3/8} \; , \; 0.4982 \leq \sqrt{3/8} \; , \; 0.4784 \leq \sqrt{3/8}$. Where we calculated the bound for each matricization.
\item
\textbf{CMN criterion}: $\rho_1$ does not obey the SFNF assumptions, and thus we are unable to use theorems \ref{mult_full_1},\ref{mult_full_2} for fully-separable states. For two out of three bi-partitions, $\rho_1$ holds the FNF assumptions needed for the bounds on bi-separable states (for the third bi-partition the state can be brought into FNF using Lemma \ref{lemma_FNF}.). Thus, Using Theorem \ref{bi-sep-1} and assuming that $h=d^2$, the CMN is the product of all singular values, and is able to detect bi-entanglement in two out of the three bi-partitions: $ 3.0549 \cdot 10^{-3} \nleq  1/(64\cdot 3 \sqrt{3}) \; , \;   3.0549 \cdot 10^{-3} \nleq  1/(64\cdot 3 \sqrt{3})$. In a similar manner, we could have used $p=1$.
\end{itemize}
Note that although the CMM is able to detect bi-entanglement in this case, some correlations are still disregarded, which gives motivation for the next section. 

\section{Improving the detection scheme}

In the last section, we have discussed the importance of the reduced state correlations, which are usually disregarded in one way or another in entanglement detection schemes using correlation tensors. Combined with the fact that a similar construction of Lemma \ref{sat_lemma} does saturate of the bounds on fully-separable states, presumably due to the same disregarded correlations, hints to some form of information loss which may turn the tide on entanglement detection. Thus, we offer to use the discarded correlation in the following improvement of the detection scheme: In order to detect that a state is not fully-separable, one may separately inspect the ``interior" correlation tensor (using the multipartite dVH criterion, for example) and the reduced state correlations. In the tripartite case, the reduced state correlations are just the three tensor faces (figure \ref{tensor_figure}), which can be with inspected using any bi-partite entanglement detection scheme (using the bi-partite CMN, for example). In the general N-party case, the reduced state correlations are tensors, and the process is to be applied recursively. Note that same line of thought can be used in order to improve detection of bi-entanglement.

The entanglement detection scheme presented in Reference \cite{sarbicki2020family} for example, does take into account the reduced state correlations, and a comparison is in order between the schemes.

\section{Quantum Discord}

\subsection{Validity of original results}
For the bipartite CMN \cite{peled2021correlation}, motivated by the definition and expression for geometric quantum discord derived in~\cite{luo2010geometric}, the following measure for discord w.r.t. Alice's subsystem was suggested:
\begin{equation}\label{Discord_CMN}
    \mathcal{D}^A_{h,p} \left( \rho \right) = \left[ \mathcal{M}_{h,p} \left( \rho \right) \right]^p  - \max_{\Pi^A \in M \left( A \right) } \left[ \mathcal{M}_{h,p} \left( \Pi^A \left[ \rho \right] \right) \right]^p ,
\end{equation}
where the maximization goes over all projective measurements on Alice's subsystem $ \Pi^A = \left\{ \Pi_i \right\}_{i=1}^{d_A} $, and $ \Pi^A \left[ \rho \right] $ is the state obtained from $\rho$ by performing the measurement $ \Pi^A $ and obtaining the appropriate ensemble of the projections $\Pi_i$ (i.e., the state is measured but not ``collapsed''). The following result suggests that $ \mathcal{D}^A_{h,p} $ may be thought of as a measure for discord:
\begin{Theorem}\label{discord_theorem}
    For any state $\rho$ and for any value of $h, p$, $ \mathcal{D}^A_{h,p} \left( \rho \right) \geq 0 $; and $ \mathcal{D}^A_{h\leq 2,p} \left( \rho \right) = 0 $ iff $ \mathcal{D}_G^A \left( \rho \right) = 0 $.
\end{Theorem}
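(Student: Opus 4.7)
The plan is to handle the two claims separately, with one common ingredient: I represent Alice's projective measurement $\Pi^A = \{\Pi_k\}$ via the pinching map $T(X) \defeq \sum_k \Pi_k X \Pi_k$ on operators on $\mathcal{H}_A$, and expand $T$ in the orthonormal basis $\{A_i\}$ as the matrix $P_{ij} \defeq \tr{A_i T(A_j)}$. Since $T^2 = T$ and $T$ is self-adjoint under the Hilbert-Schmidt inner product, $P$ is an orthogonal projection on $\mathbb{R}^{d_A^2}$, and a short calculation shows that the correlation matrix of $\Pi^A[\rho]$ equals $P\mathcal{C}$.

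For non-negativity, I would invoke the elementary singular-value inequality $\sigma_k(P\mathcal{C}) \leq \sigma_k(\mathcal{C})$ for every $k$, valid for any orthogonal projection $P$. Multiplying along any $R \in \binom{[d^2]}{h}$, raising to the $p$-th power and summing yields $[\mathcal{M}_{h,p}(\Pi^A[\rho])]^p \leq [\mathcal{M}_{h,p}(\rho)]^p$ for every $\Pi^A$, so taking the maximum over measurements preserves the inequality and gives $\mathcal{D}^A_{h,p}(\rho) \geq 0$.

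For the equivalence, the implication $\mathcal{D}_G^A(\rho) = 0 \Rightarrow \mathcal{D}^A_{h,p}(\rho) = 0$ is quick: a classical-quantum state is invariant under the projective measurement in its CQ basis, so $\Pi^A[\rho] = \rho$ for that choice, $\mathcal{C}$ is preserved, and $\mathcal{D}^A_{h,p}(\rho)$ vanishes. For the converse, I would pick a maximizer $\Pi^A$ (available by compactness of the set of projective measurements) achieving $[\mathcal{M}_{h,p}(P\mathcal{C})]^p = [\mathcal{M}_{h,p}(\mathcal{C})]^p$. Since each subset-product $\prod_{k \in R} \sigma_k(P\mathcal{C})^p$ is bounded by $\prod_{k \in R} \sigma_k(\mathcal{C})^p$ term by term, equality of the totals forces equality term by term; for $h \leq 2$ this pins down $\sigma_k(P\mathcal{C}) = \sigma_k(\mathcal{C})$ for every $k$ in the support of $\mathcal{C}$ (the degenerate case $\rank(\mathcal{C}) = 1$ corresponds to a product state, which already has zero discord). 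Equality of Frobenius norms $\norm{P\mathcal{C}}_F = \norm{\mathcal{C}}_F$ combined with $P$ being an orthogonal projection then forces $P\mathcal{C} = \mathcal{C}$ via a standard SVD argument (the columns of $\mathcal{C}$ associated with nonzero singular values lie in the range of $P$), so $\Pi^A[\rho] = \rho$; hence $\rho$ is block-diagonal with respect to $\{\Pi_k\}$, i.e., classical-quantum, giving $\mathcal{D}_G^A(\rho) = 0$.

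The main obstacle is the step promoting equality of the elementary symmetric polynomial $e_h(\sigma_1^p, \ldots, \sigma_{d^2}^p)$ to equality of each individual $\sigma_k^p$: this is precisely where the hypothesis $h \leq 2$ enters. For $h = 1$ it is immediate; for $h = 2$ one argues pairwise across indices with nonzero singular values; but for $h \geq 3$ one would need a rank assumption on $\mathcal{C}$ to pin each singular value, which is presumably why the theorem is stated only for $h \leq 2$.
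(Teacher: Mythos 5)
Your proposal is correct and follows essentially the same route as the paper: you represent Alice's measurement as an orthogonal projection $P$ acting on the correlation matrix (the paper's $\mathcal{A}_i$ from Luo--Fu), use $\sigma_k(P\mathcal{C}) \leq \sigma_k(\mathcal{C})$ to get monotonicity of the CMN and hence non-negativity, and characterize zero discord via invariance $P\mathcal{C} = \mathcal{C}$. The only difference is that you spell out the converse direction of the ``iff'' (equality of elementary symmetric polynomials forcing equality of singular values when $h \leq 2$, then $P\mathcal{C}=\mathcal{C}$ via the Frobenius-norm argument), which the paper leaves to the original reference and only sketches in one direction.
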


The theorem above is also valid for the multipartite case, when considering a system under a bi-partition. The original proof of this theorem can be easily adapted for the multipartite case, but one can be convinced by considering that Alice's discord measure compared to Bob's is no different when dividing Bob into several parties.

\subsection{Multipartite discord}

As the CMN captures the notion of discord in bipartite systems, it is intriguing to consider the same notion for multipartite system, In which global quantum discord is defined as: \cite{rulli2011global,xu2012geometric}:
 \begin{equation}
  \mathcal{D}^{G}(\rho_{A_1 \ldots A_N}) \defeq \sum_{i=1}^N S(\rho_{A_i}) - S(\rho_{A_1 \ldots A_N}) - \max_{\Pi}\left[\sum_{i=1}^N S(\Pi^{A_i}\rho_i)-S(\Pi\rho)\right] .
 \end{equation}
  Wherein $S$ is the von Neumann entropy, $S_{A_i}$ is the von Neumann entanglement entropy for reduced states, and $\Pi = \Pi^{A_1}\otimes \ldots \otimes \Pi^{A_N}$ is a measurement across the entire system. The maximization is taken in order to remove the dependence on the measurement and to capture only the non-local (quantum) correlations. This expression was shown to be non-negative, and to be zero iff the state is a classical one \cite{xu2012geometric}, as expected from quantum discord. Thus, we are motivated to present the following measure for global quantum discord, using the CMN:
\begin{equation}\label{Discord_CMN_multipartite}
    \mathcal{D}^{Global}_{h,p} \left( \rho \right) = \left[ \mathcal{M}_{h,p} \left( \rho \right) \right]^p  - \max_{\Pi } \left[ \mathcal{M}_{h,p} \left( \Pi \left[ \rho \right] \right) \right]^p.
\end{equation}

  The following result implies that the CMN's is a measure for global quantum discord:
\begin{Theorem}\label{discord_theorem_multipartite}
    For any state $\rho$ and for any value of $h, p$, $ \mathcal{D}^{Global}_{h,p} \left( \rho \right) \geq 0 $; and $ \mathcal{D}^{Global}_{h\leq 2,p} \left( \rho \right) = 0 $ iff $ \mathcal{D}_G \left( \rho \right) = 0 $.
\end{Theorem}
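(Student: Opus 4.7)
The plan is to derive Theorem \ref{discord_theorem_multipartite} by iterating the bipartite Theorem \ref{discord_theorem}, exploiting that the product measurement $\Pi = \Pi^{A_1} \otimes \cdots \otimes \Pi^{A_N}$ factors into single-party measurements that mutually commute as super-operators on $\rho$. Each factor $\Pi^{A_j}$ acts on a single party, so under the bi-partition $A_j \mid \mathrm{rest}$ it fits the bipartite framework, and (as the paper already notes) Theorem \ref{discord_theorem} extends to that setting.

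To prove non-negativity, I would introduce the telescoping sequence $\rho_0 \defeq \rho$ and $\rho_k \defeq (\Pi^{A_k} \otimes I_{\mathrm{rest}})[\rho_{k-1}]$ for $k = 1, \ldots, N$, so that $\rho_N = \Pi[\rho]$. Each transition is a single-party measurement, so the adapted Theorem \ref{discord_theorem} at the cut $A_k \mid \mathrm{rest}$ gives $[\mathcal{M}_{h,p}(\rho_{k-1})]^p \geq [\mathcal{M}_{h,p}(\rho_k)]^p$. Chaining yields $[\mathcal{M}_{h,p}(\rho)]^p \geq [\mathcal{M}_{h,p}(\Pi[\rho])]^p$ for every product $\Pi$, and taking the maximum over $\Pi$ gives $\mathcal{D}^{Global}_{h,p}(\rho) \geq 0$.

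For the $h \leq 2$ equivalence, the ``if'' direction is direct: when $\mathcal{D}^G(\rho) = 0$, $\rho$ is diagonal in some product basis, the measurement in that basis leaves $\rho$ invariant, and so the maximum in \eqref{Discord_CMN_multipartite} is attained with value $[\mathcal{M}_{h,p}(\rho)]^p$. For the ``only if'' direction, suppose $\Pi^\ast = \Pi^{A_1} \otimes \cdots \otimes \Pi^{A_N}$ realizes $\mathcal{M}_{h,p}(\Pi^\ast[\rho]) = \mathcal{M}_{h,p}(\rho)$. Since the single-party measurements commute, I can rearrange the telescoping sequence to begin with any chosen $A_j$; equality at the two ends of the chain then forces equality at every intermediate step, and in particular $\mathcal{M}_{h,p}(\rho) = \mathcal{M}_{h,p}(\Pi^{A_j}[\rho])$ for every $j$. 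Applying the ``only if'' direction of Theorem \ref{discord_theorem} (valid for $h \leq 2$) under each cut $A_j \mid \mathrm{rest}$ delivers $\mathcal{D}_G^{A_j}(\rho) = 0$, i.e., $\rho$ is classical-quantum with respect to each single party individually.

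The main obstacle is then the structural step of concluding that ``classicality with respect to each individual party'' forces $\rho$ to be fully classical, i.e., $\mathcal{D}^G(\rho) = 0$. The argument I would use is that classicality on $A_j$ is equivalent to $\rho$ commuting with every rank-one projector in some basis of $A_j$ (tensored with identity on the rest). Projectors supported on disjoint subsystems commute, so $\rho$ commutes with every product projector $|k_1\rangle\langle k_1|_{A_1} \otimes \cdots \otimes |k_N\rangle\langle k_N|_{A_N}$, making $\rho$ diagonal in the product basis $\{|k_1 \cdots k_N\rangle\}$ and hence fully classical. The delicate point is compatibility of the local bases when the local marginals have degenerate spectra, since then the classical basis witnessing $\mathcal{D}_G^{A_j}(\rho) = 0$ is not unique; one must verify that a simultaneous compatible choice can always be made, which I expect to follow by induction on $N$ once the commutation structure is in hand.
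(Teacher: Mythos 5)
Your overall strategy --- decomposing $\Pi$ into commuting single-party measurements and telescoping --- is the same core idea as the paper's proof, but the implementation differs in a way that leaves one gap. The paper does not invoke Theorem \ref{discord_theorem} as a black box at the cuts $A_j \mid \mathrm{rest}$; it fixes \emph{one} matricization $\mathcal{C}$ for an arbitrary bi-partition $A/B$ and shows that a measurement on any single constituent party of $A$ acts on that matricization as left-multiplication by a projection matrix $\mathcal{A}_i$ (and on a constituent of $B$ as right-multiplication by $\mathcal{B}_i$), so that $\sigma_k\left( \mathcal{F}_A \mathcal{C}\mathcal{F}_B \right) \le \norm{\mathcal{F}_A}_1 \norm{\mathcal{F}_B}_1 \sigma_k\left( \mathcal{C} \right) = \sigma_k\left( \mathcal{C} \right)$ for every $k$, whence $\mathcal{M}_{h,p}$ is monotone under $\Pi$. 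Your telescoping step instead applies the bipartite theorem at the cut $A_j\mid\mathrm{rest}$, which controls the CMN of the $A_j\mid\mathrm{rest}$ matricization only; for a matricization that groups several parties on one side (e.g.\ $A_1A_2 \mid A_3$), the measurement $\Pi^{A_1}\otimes\mathbbm{1}$ is a coarse (non--rank-one) projective measurement on the left factor and is not literally covered by Theorem \ref{discord_theorem} as stated. The fix is exactly the paper's norm argument, so the gap is small but real, and you should state explicitly which matricization your chain of inequalities refers to.

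On the $h\le 2$ equivalence you actually go further than the paper, whose proof only argues the ``if'' direction (an undisturbing measurement gives $\mathcal{F}_A\mathcal{C}\mathcal{F}_B = \mathcal{C}$, hence a vanishing difference, which is the maximum by monotonicity). Your ``only if'' chain --- equality at the endpoints forces equality at every telescoping step, commutativity lets you reorder so that $\mathcal{M}_{h,p}(\rho) = \mathcal{M}_{h,p}(\Pi^{A_j}[\rho])$ for each $j$, the bipartite theorem then yields $\mathcal{D}_G^{A_j}(\rho)=0$ for each $j$, and the commuting-projector argument upgrades this to full classicality --- is sound and supplies a step the paper omits. Your closing worry about degenerate marginal spectra is, however, unfounded: no compatibility between the local witnessing bases is required. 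You only need \emph{one} basis per party for which $\rho$ commutes with the corresponding rank-one projectors tensored with the identity; projectors supported on distinct tensor factors commute automatically, so $\rho$ commutes with a complete set of orthogonal rank-one product projectors and is therefore diagonal in that product basis, giving $\mathcal{D}_G(\rho)=0$ directly. No induction on $N$ is needed.
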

Due to the fact the $\Pi$ is comprised of local projective measurement, it can be decomposed into several measurements, one for each party. Thus, the logic behind the bipartite case can be used to claim that if after each measurement the CMN reduces in value, then measuring $\Pi$ would lead to $\mathcal{M}_{h,p} \left( \rho \right) \geq \mathcal{M}_{h,p} \left( \rho' \right) $, for all $h,p$, in which $\rho'$ denotes the state after being measured by $\Pi$. The complete proof for this theorem appears in Appendix \ref{Proof_global_discord}. In \cite{xu2012geometric}, it was presented that quantum discord can be computed using the correlation tensor, we have shown that the notion of discord is preserved when matricizing the correlation tensor.

\section{Conclusions}

Our goal in this work was to detect multipartite entanglement via matricizations of correlation tensors. We generalized a bipartite entanglement detector, the CMN, reevaluating and extending it to multipartite systems. Our detection scheme is in the form of bounds, which through the same endeavour, we have managed to saturate in the case of bi-separable states. We further observed information loss in known
methods for detecting entanglement from correlation tensors, which fail to take into account reduced state (partial trace) correlations. Thus, we have presented a method which seeks to overcome this issue by using multiple bounds on correlations of the reduced states. However, we do not know whether it is possible to find a single bound that would have the same detection capabilities as our multi-stage scheme. Such a parameter, if exists, would necessarily have to consider all correlations simultaneously; the method of using local filtering does not apply, since it is not always possible to transform a state into SFNF using SLOCC. Overcoming these issues is a possible direction for future research.

Moreover, our discussion led us to present a permutation invariant state which saturates the bound on bi-separable states. It is fundamentally interesting to ask whether one may generalize this notion into a family of such states for $N$-party systems and for higher dimension. Such a construction could also prove useful in fields and applications where permutation-invariant states arise naturally, for example many-body bosonic systems, and security of quantum key distribution protocols.

Furthermore, using the same logic of tensor matricization, we have shown that the CMN is a measure for global quantum discord. It was already known that multipartite full-separability can be characterized via state partitions, specifically when using matricized correlation tensors. Here, we showed a similar idea for discord, as in that the notion of discord is preserved when matricizing correlation tensors, compared to other measures which use the entire tensor. It is interesting to see what properties of a multipartite state are preserved under state partitions, which, as we have shown, require only dealing with matrices instead of higher-rank tensors.

Finally, having generalized the CMN for multipartite systems, it might be interesting to ask if it can be generalized for continuous variable systems. In such generalization, several questions arise: would it be easier to take into account reduced state correlations? Can our measure of global quantum discord be adapted to this setting?

\section[\appendixname~\thesection]{Appendix - Proofs of theorems}
\subsection[\appendixname~\thesubsection]{Proving the bounds on fully-separable states}\label{proofs_fully_bounds}

Under the SFNF assumption, the singular values of $\mathcal{C}_{flat}$ and $\mathcal{W}_{flat}$ (of the multipartite dVH criterion \cite{de2011multipartite}) will be the same, up to an extra singular value:
\begin{equation}\label{dV_full_Eq}
   \sigma(\mathcal{C}_{flat}) = \left[\sigma_0 = \prod_i \left(\frac{1}{\sqrt{d_i}} \right), \sigma_1, \ldots, \sigma_{d^2-1} \right] \; , \;\sigma(\mathcal{W}_{flat}) =  \left[\sigma_1, \ldots, \sigma_{d^2-1} \right].
\end{equation}
This is due to the fact that the singular values do not change under substitution of rows/columns and removal of rows/columns of zeros. Furthermore, the extra singular value is due to the main tensor vertex: $\braket{A_0\otimes B_0 \otimes \ldots \otimes N_0}$, which is not zero under SFNF. 

Thus, we may bound the CMN for different cases:

\begin{proof}[Proof of Theorem \ref{mult_full_1}]
We wish to claim that $\sigma_0 = \prod_i \left(\frac{1}{\sqrt{d_i}} \right)$ is among the $h$ largest singular values, we may use Eq. \eqref{dV_full_sep} to claim that:
\begin{equation}
\sigma_{h} \leq \frac{1}{h}\prod_{j=1}^N\sqrt{\frac{d_j-1}{d_j}},
\end{equation}
because the singular values are in descending order, so in the ``worst case" they are equal. Thus, for our bounds, we claim that:
\begin{equation}
\sigma_0 = \prod_{i=1}^N \left(\frac{1}{\sqrt{d_i}} \right) \geq \frac{1}{h}\prod_{i=1}^N\sqrt{\frac{d_i-1}{d_i}} \Rightarrow h \geq \prod_{i=1}^N\sqrt{d_i-1}.
\end{equation}

Thus, a multipartite state in SFNF, under the any bi-partition $A/B$, holds that:
\begin{equation}
\begin{split}
& \mathcal{M}_{h,p=\infty} = \prod_{k=0}^{h-1} \sigma_k \left( \mathcal{C}_{flat} \right)  = \prod_i \left(\frac{1}{\sqrt{d_i}} \right) \prod_{k=1}^{h-1} \sigma_k \left( \mathcal{W}_{flat} \right) \leq
\\
&\leq \prod_i \left(\frac{1}{\sqrt{d_i}} \right) \left( h-1 \right)^{-\left( h-1 \right)} \left[ \sum_{k=1}^{h-1} \sigma_k \left( \mathcal{W}_{flat} \right) \right]^{h-1} \leq 
\\
&\leq \prod_i \left(\frac{1}{\sqrt{d_i}} \right) \left( h-1 \right)^{-\left( h-1 \right)}\left[\prod_i\left(\frac{d_i-1}{d_i}\right)\right]^{\frac{h-1}{2}}.
\end{split}
\end{equation}
 wherein $C_h$ is the compound matrix, and we have used the inequality of arithmetic and geometric means in the fourth transition and Eq. \eqref{dV_full_sep} the the last transition. Using our definitions for $\alpha$ and $\beta$, we are done.
\end{proof}

\begin{proof}[Proof of Theorem \ref{mult_full_2}]For a multipartite state in SFNF, under the any bi-partition $A/B$, We obtain:

\begin{align}\label{bounding_p_1}
 & \mathcal{M}_{h,p=1} = S_h \left( \alpha, \sigma_1, \ldots, \sigma_{d^2-1} \right) = \nonumber\\
 & = \alpha S_{h-1} \left( \sigma_1, \ldots, \sigma_{d^2-1} \right) + S_h \left( \sigma_1, \ldots, \sigma_{d^2-1} \right)
\end{align}
Let us denote $ s \defeq \sum_{k=1}^{d^2-1} \sigma_k $. Clearly $ s \leq \beta$. Moreover, the vectors $ \vec{\sigma} \defeq \left( \sigma_1, \ldots, \sigma_{d^2-1} \right) $ and $ \vec{e} \defeq \frac{s}{d^2-1} \left( 1, \ldots, 1 \right) $ both sum up to $s$; thus, $ \vec{\sigma} \succeq \vec{e} $ ($\succeq$ denotes majorization). Since the symmetric polynomials $ S_h $ are Schur concave, we obtain:
\begin{equation}
S_h \left( \sigma_1, \ldots, \sigma_{d^2-1} \right) \leq S_h \left( \frac{s}{d^2-1}, \ldots, \frac{s}{d^2-1} \right) .
\end{equation}
Next, we use the fact that $ S_h $ is monotonically increasing in each of its variables, alongside the inequality $s \leq \beta $, to obtain:
\begin{equation}
S_h \left( \sigma_1, \ldots, \sigma_{d^2-1} \right) \leq S_h \left( \frac{\beta}{d^2-1}, \ldots, \frac{\beta}{d^2-1} \right) .
\end{equation}
Substitution in Eq. \eqref{bounding_p_1} yields:
\begin{align}
\mathcal{M}_{h,p=1} & \leq \alpha S_{h-1} \left( \frac{\beta}{d^2-1}, \ldots, \frac{\beta}{d^2-1} \right) + S_h \left( \frac{\beta}{d^2-1}, \ldots, \frac{\beta}{d^2-1} \right) = \nonumber\\
& = S_h \left( \alpha, \frac{\beta}{d^2-1}, \ldots, \frac{\beta}{d^2-1} \right) 
\end{align}
where $\beta$ is always repeated $d^2-1$ times.
\end{proof}

\subsection[\appendixname~\thesubsection]{Proving of the CMN as a global quantum discord measure}\label{Proof_global_discord}

\begin{proof}[Proof of Theorem \ref{discord_theorem_multipartite}]
Consider a mutipartite state partitioned into a bipartite state $A$ and $B$, with the respective matricization of the correlation tensor $\mathcal{C}$. We may further consider a local measurement on one of the parties the construct $A$, for example, if $A$ consists of two parties we may measure $\mathbbm{1} \otimes \sigma_X$. The evolution of the correlation matrix under such a measurement is given by: $\mathcal{C}' =\mathcal{A}_i\mathcal{C}$, where $\mathcal{A}_i$ is a $d^2_{A} \times d^2_{A}$ matrix (the construction of $\mathcal{A}_i$ is exactly the same as in Theorem 1 of \cite{luo2010geometric}). Now, by Theorem 6.7(7) in \cite{hiai2014introduction}, for all $k \in \left\{ 1, \ldots, d_A^2 \right\} $ we have
\begin{equation}
    \sigma_k \left( \mathcal{A}_i\mathcal{C} \right) \leq \norm{ \mathcal{A}_i }_1 \sigma_k \left( \mathcal{C} \right).
\end{equation}
If we were to measure on B, the matrix multiplication would be on the right, and the the construction still holds:
\begin{equation}
    \sigma_k \left( \mathcal{C}\mathcal{B}_i \right) \leq \norm{ \mathcal{B}_i }_1 \sigma_k \left( \mathcal{C} \right).
\end{equation}

 Due to the fact that $\Pi$ consists of local projective measurement, which can be interchanged, we may continue measuring the state, where all measurements will comprise $\Pi$. Each measurement will further decrease the value of the singular values of the (post measurement) correlation matrix, yielding:
\begin{equation}
    \sigma_k \left( \mathcal{F}_A\mathcal{C} \mathcal{F}_B \right) \leq \norm{ \mathcal{F}_A }_1\norm{ \mathcal{F}_B }_1 \sigma_k   \left( \mathcal{C} \right).
\end{equation}
Wherein $\mathcal{F}_A = \prod_i\mathcal{A}_i$ , $\mathcal{F}_B = \prod_i\mathcal{B}_i$ represent the transformation undergone by the correlation matrix under $\Pi$, and $\mathcal{A}_i$, $\mathcal{B}_i$ are Commutative, as they represent non-local measurements.

 Because $ \mathcal{A}_i $ is a projection, it holds that: $ \norm{ \mathcal{A}_i }_1 = \max_j \sigma_j \left( \mathcal{A} \right) = 1 $, and the same for a measurement on $B$. Therefore, $ \sigma_k \left( \mathcal{F}_A\mathcal{C} \mathcal{F}_B \right) \leq \sigma_k \left( \mathcal{C} \right) $ for all $k$, and we conclude that $ \mathcal{M}_{h,p} \left( \rho \right) \geq \mathcal{M}_{h,p} \left( \rho' \right) $ for all $h,p$, using the fact that the CMNs are all monotonically non-decreasing w.r.t. the singular values $ \sigma_k $.
 
  Suppose $\rho$ has zero discord, which happens if and only if there exists a measurement $\Pi$ that does not disturb the state - i.e., there exists a matrices $ \mathcal{F}_A$ and  $\mathcal{F}_B $ such that $ \mathcal{F}_A\mathcal{C} \mathcal{F}_B = \mathcal{C} $. Then, for this choice of measurement, we have $ \mathcal{M}_{h,p} \left( \rho \right) - \mathcal{M}_{h,p} \left( \rho' \right) = 0 $. By the non-decreasing property for the CMN we have proven above, this is indeed the maximum, hence $ \mathcal{D}_{h \leq 2, p} \left( \rho \right) = 0 $.
  
\end{proof}

\section[\appendixname~\thesection]{Appendix - States used in this work}\label{states_for_comparison}

First, let us present the SIC-POVM with Bell state: $ \rho_1 = \sum_{i=1}^4 \rho^{SIC-POVM}_i\otimes \ket{ \psi^{Bell}_i} \bra{\psi^{Bell}_i}$. Wherein:
\begin{equation}
\begin{split}
\rho^{SIC-POVM}_1 = 0.5
    \begin{bmatrix}
\frac{\sqrt{3}+1}{\sqrt{3}} & \frac{\sqrt{3}}{3}(1+i)\\
\frac{\sqrt{3}}{3}(1-i) & \frac{\sqrt{3}-1}{\sqrt{3}} 
\end{bmatrix}
\\
\rho^{SIC-POVM}_2 = 0.5
    \begin{bmatrix}
\frac{\sqrt{3}-1}{\sqrt{3}} & \frac{\sqrt{3}}{3}(1-i)\\
\frac{\sqrt{3}}{3}(1+i) & \frac{\sqrt{3}+1}{\sqrt{3}} 
\end{bmatrix}
\\
\rho^{SIC-POVM}_3 = 0.5
    \begin{bmatrix}
\frac{\sqrt{3}-1}{\sqrt{3}} & \frac{\sqrt{3}}{3}(-1-i)\\
\frac{\sqrt{3}}{3}(-1+i) & \frac{\sqrt{3}+1}{\sqrt{3}} 
\end{bmatrix}
\\
\rho^{SIC-POVM}_4 = 0.5
    \begin{bmatrix}
\frac{\sqrt{3}-1}{\sqrt{3}} & \frac{\sqrt{3}}{3}(-1+i)\\
\frac{\sqrt{3}}{3}(-1-i) & \frac{\sqrt{3}+1}{\sqrt{3}} 
\end{bmatrix}
\\
\rho^{Bell}_1 = 0.5
    \begin{bmatrix}
1 \\ 0 \\ 0 \\ 1 
\end{bmatrix}
\\
\rho^{Bell}_2 = 0.5
    \begin{bmatrix}
0 \\ 1 \\ 1 \\ 0 
\end{bmatrix}
\\
\rho^{Bell}_3 = 0.5
    \begin{bmatrix}
1 \\ 0 \\ 0 \\ -1 
\end{bmatrix}
\\
\rho^{Bell}_4 = 0.5
    \begin{bmatrix}
0 \\ 1 \\ -1 \\ 0 
\end{bmatrix}
\end{split}
\end{equation}

As said, the SIC-POVM states can be presented as Bloch vectors:

    \begin{equation}
        \begin{split}
            r_1^{Bell} = \frac{1}{\sqrt{3}}
            \begin{bmatrix}
                1 \\ -1 \\ 1
            \end{bmatrix}
            \\
            r_2^{Bell} = \frac{1}{\sqrt{3}}
            \begin{bmatrix}
                1 \\ 1 \\ -1
            \end{bmatrix}
            \\
            r_3^{Bell} = \frac{1}{\sqrt{3}}
            \begin{bmatrix}
                -1 \\ 1 \\ 1
            \end{bmatrix}
\\
            r_4^{Bell} = \frac{1}{\sqrt{3}}
            \begin{bmatrix}
                -1 \\ -1 \\ -1
            \end{bmatrix}
                    \end{split}
    \end{equation}
Note that in order for our construction to work, the indices of the states  cannot be changed.


\bibliographystyle{apsrev4-1}
\bibliography{ArticleReview}

\begin{thebibliography}{22}%
\makeatletter
\providecommand \@ifxundefined [1]{%
 \@ifx{#1\undefined}
}%
\providecommand \@ifnum [1]{%
 \ifnum #1\expandafter \@firstoftwo
 \else \expandafter \@secondoftwo
 \fi
}%
\providecommand \@ifx [1]{%
 \ifx #1\expandafter \@firstoftwo
 \else \expandafter \@secondoftwo
 \fi
}%
\providecommand \natexlab [1]{#1}%
\providecommand \enquote  [1]{``#1''}%
\providecommand \bibnamefont  [1]{#1}%
\providecommand \bibfnamefont [1]{#1}%
\providecommand \citenamefont [1]{#1}%
\providecommand \href@noop [0]{\@secondoftwo}%
\providecommand \href [0]{\begingroup \@sanitize@url \@href}%
\providecommand \@href[1]{\@@startlink{#1}\@@href}%
\providecommand \@@href[1]{\endgroup#1\@@endlink}%
\providecommand \@sanitize@url [0]{\catcode `\\12\catcode `\$12\catcode
  `\&12\catcode `\#12\catcode `\^12\catcode `\_12\catcode `\%12\relax}%
\providecommand \@@startlink[1]{}%
\providecommand \@@endlink[0]{}%
\providecommand \url  [0]{\begingroup\@sanitize@url \@url }%
\providecommand \@url [1]{\endgroup\@href {#1}{\urlprefix }}%
\providecommand \urlprefix  [0]{URL }%
\providecommand \Eprint [0]{\href }%
\providecommand \doibase [0]{http://dx.doi.org/}%
\providecommand \selectlanguage [0]{\@gobble}%
\providecommand \bibinfo  [0]{\@secondoftwo}%
\providecommand \bibfield  [0]{\@secondoftwo}%
\providecommand \translation [1]{[#1]}%
\providecommand \BibitemOpen [0]{}%
\providecommand \bibitemStop [0]{}%
\providecommand \bibitemNoStop [0]{.\EOS\space}%
\providecommand \EOS [0]{\spacefactor3000\relax}%
\providecommand \BibitemShut  [1]{\csname bibitem#1\endcsname}%
\let\auto@bib@innerbib\@empty
\bibitem [{\citenamefont {Peres}(1996)}]{peres1996separability}%
  \BibitemOpen
  \bibfield  {author} {\bibinfo {author} {\bibfnamefont {A.}~\bibnamefont
  {Peres}},\ }\href@noop {} {\bibfield  {journal} {\bibinfo  {journal} {Phys.
  Rev. Lett.}\ }\textbf {\bibinfo {volume} {77}},\ \bibinfo {pages} {1413}
  (\bibinfo {year} {1996})}\BibitemShut {NoStop}%
\bibitem [{\citenamefont {Horodecki}\ \emph {et~al.}(1996)\citenamefont
  {Horodecki}, \citenamefont {Horodecki},\ and\ \citenamefont
  {Horodecki}}]{HORODECKI19961}%
  \BibitemOpen
  \bibfield  {author} {\bibinfo {author} {\bibfnamefont {M.}~\bibnamefont
  {Horodecki}}, \bibinfo {author} {\bibfnamefont {P.}~\bibnamefont
  {Horodecki}}, \ and\ \bibinfo {author} {\bibfnamefont {R.}~\bibnamefont
  {Horodecki}},\ }\href {\doibase
  https://doi.org/10.1016/S0375-9601(96)00706-2} {\bibfield  {journal}
  {\bibinfo  {journal} {Physics Letters A}\ }\textbf {\bibinfo {volume}
  {223}},\ \bibinfo {pages} {1} (\bibinfo {year} {1996})}\BibitemShut {NoStop}%
\bibitem [{\citenamefont {Chen}\ and\ \citenamefont
  {Wu}(2002)}]{chen2002matrix}%
  \BibitemOpen
  \bibfield  {author} {\bibinfo {author} {\bibfnamefont {K.}~\bibnamefont
  {Chen}}\ and\ \bibinfo {author} {\bibfnamefont {L.-A.}\ \bibnamefont {Wu}},\
  }\href@noop {} {\bibfield  {journal} {\bibinfo  {journal} {Quantum Inf.
  Comput.}\ }\textbf {\bibinfo {volume} {3}} (\bibinfo {year}
  {2002})}\BibitemShut {NoStop}%
\bibitem [{\citenamefont {Rudolph}(2005)}]{rudolph2005further}%
  \BibitemOpen
  \bibfield  {author} {\bibinfo {author} {\bibfnamefont {O.}~\bibnamefont
  {Rudolph}},\ }\href@noop {} {\bibfield  {journal} {\bibinfo  {journal}
  {Quantum Inf. Process.}\ }\textbf {\bibinfo {volume} {4}},\ \bibinfo {pages}
  {219} (\bibinfo {year} {2005})}\BibitemShut {NoStop}%
\bibitem [{\citenamefont {de~Vicente}(2007)}]{vicente2007separability}%
  \BibitemOpen
  \bibfield  {author} {\bibinfo {author} {\bibfnamefont {J.~I.}\ \bibnamefont
  {de~Vicente}},\ }\href@noop {} {\bibfield  {journal} {\bibinfo  {journal}
  {Quantum Inf. Comput.}\ }\textbf {\bibinfo {volume} {7}},\ \bibinfo {pages}
  {624–638} (\bibinfo {year} {2007})}\BibitemShut {NoStop}%
\bibitem [{\citenamefont {G{\"u}hne}\ and\ \citenamefont
  {T{\'o}th}(2009)}]{guhne2009entanglement}%
  \BibitemOpen
  \bibfield  {author} {\bibinfo {author} {\bibfnamefont {O.}~\bibnamefont
  {G{\"u}hne}}\ and\ \bibinfo {author} {\bibfnamefont {G.}~\bibnamefont
  {T{\'o}th}},\ }\href@noop {} {\bibfield  {journal} {\bibinfo  {journal}
  {Phys. Rep.}\ }\textbf {\bibinfo {volume} {474}},\ \bibinfo {pages} {1}
  (\bibinfo {year} {2009})}\BibitemShut {NoStop}%
\bibitem [{\citenamefont {Peled}\ \emph {et~al.}(2021)\citenamefont {Peled},
  \citenamefont {Te’eni}, \citenamefont {Carmi},\ and\ \citenamefont
  {Cohen}}]{peled2021correlation}%
  \BibitemOpen
  \bibfield  {author} {\bibinfo {author} {\bibfnamefont {B.~Y.}\ \bibnamefont
  {Peled}}, \bibinfo {author} {\bibfnamefont {A.}~\bibnamefont {Te’eni}},
  \bibinfo {author} {\bibfnamefont {A.}~\bibnamefont {Carmi}}, \ and\ \bibinfo
  {author} {\bibfnamefont {E.}~\bibnamefont {Cohen}},\ }\href@noop {}
  {\bibfield  {journal} {\bibinfo  {journal} {Scientific reports}\ }\textbf
  {\bibinfo {volume} {11}},\ \bibinfo {pages} {1} (\bibinfo {year}
  {2021})}\BibitemShut {NoStop}%
\bibitem [{\citenamefont {Carmi}\ and\ \citenamefont
  {Cohen}(2019)}]{carmi2019relativistic}%
  \BibitemOpen
  \bibfield  {author} {\bibinfo {author} {\bibfnamefont {A.}~\bibnamefont
  {Carmi}}\ and\ \bibinfo {author} {\bibfnamefont {E.}~\bibnamefont {Cohen}},\
  }\href@noop {} {\bibfield  {journal} {\bibinfo  {journal} {Sci. Adv.}\
  }\textbf {\bibinfo {volume} {5}},\ \bibinfo {pages} {eaav8370} (\bibinfo
  {year} {2019})}\BibitemShut {NoStop}%
\bibitem [{\citenamefont {Badziag}\ \emph {et~al.}(2008)\citenamefont
  {Badziag}, \citenamefont {Brukner}, \citenamefont {Laskowski}, \citenamefont
  {Paterek},\ and\ \citenamefont {{\.Z}ukowski}}]{badziag2008experimentally}%
  \BibitemOpen
  \bibfield  {author} {\bibinfo {author} {\bibfnamefont {P.}~\bibnamefont
  {Badziag}}, \bibinfo {author} {\bibfnamefont {{\v{C}}.}~\bibnamefont
  {Brukner}}, \bibinfo {author} {\bibfnamefont {W.}~\bibnamefont {Laskowski}},
  \bibinfo {author} {\bibfnamefont {T.}~\bibnamefont {Paterek}}, \ and\
  \bibinfo {author} {\bibfnamefont {M.}~\bibnamefont {{\.Z}ukowski}},\
  }\href@noop {} {\bibfield  {journal} {\bibinfo  {journal} {Physical review
  letters}\ }\textbf {\bibinfo {volume} {100}},\ \bibinfo {pages} {140403}
  (\bibinfo {year} {2008})}\BibitemShut {NoStop}%
\bibitem [{\citenamefont {de~Vicente}\ and\ \citenamefont
  {Huber}(2011)}]{de2011multipartite}%
  \BibitemOpen
  \bibfield  {author} {\bibinfo {author} {\bibfnamefont {J.~I.}\ \bibnamefont
  {de~Vicente}}\ and\ \bibinfo {author} {\bibfnamefont {M.}~\bibnamefont
  {Huber}},\ }\href@noop {} {\bibfield  {journal} {\bibinfo  {journal} {Phys.
  Rev. A}\ }\textbf {\bibinfo {volume} {84}},\ \bibinfo {pages} {062306}
  (\bibinfo {year} {2011})}\BibitemShut {NoStop}%
\bibitem [{\citenamefont {D{\"u}r}\ and\ \citenamefont
  {Cirac}(2000)}]{dur2000classification}%
  \BibitemOpen
  \bibfield  {author} {\bibinfo {author} {\bibfnamefont {W.}~\bibnamefont
  {D{\"u}r}}\ and\ \bibinfo {author} {\bibfnamefont {J.~I.}\ \bibnamefont
  {Cirac}},\ }\href@noop {} {\bibfield  {journal} {\bibinfo  {journal}
  {Physical Review A}\ }\textbf {\bibinfo {volume} {61}},\ \bibinfo {pages}
  {042314} (\bibinfo {year} {2000})}\BibitemShut {NoStop}%
\bibitem [{\citenamefont {Verstraete}\ \emph {et~al.}(2003)\citenamefont
  {Verstraete}, \citenamefont {Dehaene},\ and\ \citenamefont
  {De~Moor}}]{verstraete2003normal}%
  \BibitemOpen
  \bibfield  {author} {\bibinfo {author} {\bibfnamefont {F.}~\bibnamefont
  {Verstraete}}, \bibinfo {author} {\bibfnamefont {J.}~\bibnamefont {Dehaene}},
  \ and\ \bibinfo {author} {\bibfnamefont {B.}~\bibnamefont {De~Moor}},\
  }\href@noop {} {\bibfield  {journal} {\bibinfo  {journal} {Phys. Rev. A}\
  }\textbf {\bibinfo {volume} {68}},\ \bibinfo {pages} {012103} (\bibinfo
  {year} {2003})}\BibitemShut {NoStop}%
\bibitem [{\citenamefont {Leverrier}\ \emph {et~al.}(2013)\citenamefont
  {Leverrier}, \citenamefont {Garc{\'\i}a-Patr{\'o}n}, \citenamefont {Renner},\
  and\ \citenamefont {Cerf}}]{leverrier2013security}%
  \BibitemOpen
  \bibfield  {author} {\bibinfo {author} {\bibfnamefont {A.}~\bibnamefont
  {Leverrier}}, \bibinfo {author} {\bibfnamefont {R.}~\bibnamefont
  {Garc{\'\i}a-Patr{\'o}n}}, \bibinfo {author} {\bibfnamefont {R.}~\bibnamefont
  {Renner}}, \ and\ \bibinfo {author} {\bibfnamefont {N.~J.}\ \bibnamefont
  {Cerf}},\ }\href@noop {} {\bibfield  {journal} {\bibinfo  {journal} {Physical
  review letters}\ }\textbf {\bibinfo {volume} {110}},\ \bibinfo {pages}
  {030502} (\bibinfo {year} {2013})}\BibitemShut {NoStop}%
\bibitem [{\citenamefont {Sheridan}\ \emph {et~al.}(2010)\citenamefont
  {Sheridan}, \citenamefont {Le},\ and\ \citenamefont
  {Scarani}}]{sheridan2010finite}%
  \BibitemOpen
  \bibfield  {author} {\bibinfo {author} {\bibfnamefont {L.}~\bibnamefont
  {Sheridan}}, \bibinfo {author} {\bibfnamefont {T.~P.}\ \bibnamefont {Le}}, \
  and\ \bibinfo {author} {\bibfnamefont {V.}~\bibnamefont {Scarani}},\
  }\href@noop {} {\bibfield  {journal} {\bibinfo  {journal} {New Journal of
  Physics}\ }\textbf {\bibinfo {volume} {12}},\ \bibinfo {pages} {123019}
  (\bibinfo {year} {2010})}\BibitemShut {NoStop}%
\bibitem [{\citenamefont {{\c{C}}orbaci}\ \emph {et~al.}(2016)\citenamefont
  {{\c{C}}orbaci}, \citenamefont {Karaka{\c{s}}},\ and\ \citenamefont
  {Gen{\c{c}}ten}}]{ccorbaci2016construction}%
  \BibitemOpen
  \bibfield  {author} {\bibinfo {author} {\bibfnamefont {S.}~\bibnamefont
  {{\c{C}}orbaci}}, \bibinfo {author} {\bibfnamefont {M.~D.}\ \bibnamefont
  {Karaka{\c{s}}}}, \ and\ \bibinfo {author} {\bibfnamefont {A.}~\bibnamefont
  {Gen{\c{c}}ten}},\ }in\ \href@noop {} {\emph {\bibinfo {booktitle} {Journal
  of Physics: Conference Series}}},\ Vol.\ \bibinfo {volume} {766}\ (\bibinfo
  {organization} {IOP Publishing},\ \bibinfo {year} {2016})\ p.\ \bibinfo
  {pages} {012014}\BibitemShut {NoStop}%
\bibitem [{\citenamefont {Fujii}(2001)}]{fujii2001generalized}%
  \BibitemOpen
  \bibfield  {author} {\bibinfo {author} {\bibfnamefont {K.}~\bibnamefont
  {Fujii}},\ }\href@noop {} {\bibfield  {journal} {\bibinfo  {journal} {arXiv
  preprint quant-ph/0106018}\ } (\bibinfo {year} {2001})}\BibitemShut {NoStop}%
\bibitem [{\citenamefont {Karimipour}\ \emph {et~al.}(2002)\citenamefont
  {Karimipour}, \citenamefont {Bahraminasab},\ and\ \citenamefont
  {Bagherinezhad}}]{karimipour2002quantum}%
  \BibitemOpen
  \bibfield  {author} {\bibinfo {author} {\bibfnamefont {V.}~\bibnamefont
  {Karimipour}}, \bibinfo {author} {\bibfnamefont {A.}~\bibnamefont
  {Bahraminasab}}, \ and\ \bibinfo {author} {\bibfnamefont {S.}~\bibnamefont
  {Bagherinezhad}},\ }\href@noop {} {\bibfield  {journal} {\bibinfo  {journal}
  {Physical Review A}\ }\textbf {\bibinfo {volume} {65}},\ \bibinfo {pages}
  {052331} (\bibinfo {year} {2002})}\BibitemShut {NoStop}%
\bibitem [{\citenamefont {Sarbicki}\ \emph {et~al.}(2020)\citenamefont
  {Sarbicki}, \citenamefont {Scala},\ and\ \citenamefont
  {Chru{\'s}ci{\'n}ski}}]{sarbicki2020family}%
  \BibitemOpen
  \bibfield  {author} {\bibinfo {author} {\bibfnamefont {G.}~\bibnamefont
  {Sarbicki}}, \bibinfo {author} {\bibfnamefont {G.}~\bibnamefont {Scala}}, \
  and\ \bibinfo {author} {\bibfnamefont {D.}~\bibnamefont
  {Chru{\'s}ci{\'n}ski}},\ }\href@noop {} {\bibfield  {journal} {\bibinfo
  {journal} {Physical Review A}\ }\textbf {\bibinfo {volume} {101}},\ \bibinfo
  {pages} {012341} (\bibinfo {year} {2020})}\BibitemShut {NoStop}%
\bibitem [{\citenamefont {Luo}\ and\ \citenamefont
  {Fu}(2010)}]{luo2010geometric}%
  \BibitemOpen
  \bibfield  {author} {\bibinfo {author} {\bibfnamefont {S.}~\bibnamefont
  {Luo}}\ and\ \bibinfo {author} {\bibfnamefont {S.}~\bibnamefont {Fu}},\
  }\href@noop {} {\bibfield  {journal} {\bibinfo  {journal} {Phys. Rev. A}\
  }\textbf {\bibinfo {volume} {82}},\ \bibinfo {pages} {034302} (\bibinfo
  {year} {2010})}\BibitemShut {NoStop}%
\bibitem [{\citenamefont {Rulli}\ and\ \citenamefont
  {Sarandy}(2011)}]{rulli2011global}%
  \BibitemOpen
  \bibfield  {author} {\bibinfo {author} {\bibfnamefont {C.}~\bibnamefont
  {Rulli}}\ and\ \bibinfo {author} {\bibfnamefont {M.}~\bibnamefont
  {Sarandy}},\ }\href@noop {} {\bibfield  {journal} {\bibinfo  {journal}
  {Physical Review A}\ }\textbf {\bibinfo {volume} {84}},\ \bibinfo {pages}
  {042109} (\bibinfo {year} {2011})}\BibitemShut {NoStop}%
\bibitem [{\citenamefont {Xu}(2012)}]{xu2012geometric}%
  \BibitemOpen
  \bibfield  {author} {\bibinfo {author} {\bibfnamefont {J.}~\bibnamefont
  {Xu}},\ }\href@noop {} {\bibfield  {journal} {\bibinfo  {journal} {Journal of
  Physics A: Mathematical and Theoretical}\ }\textbf {\bibinfo {volume} {45}},\
  \bibinfo {pages} {405304} (\bibinfo {year} {2012})}\BibitemShut {NoStop}%
\bibitem [{\citenamefont {Hiai}\ and\ \citenamefont
  {Petz}(2014)}]{hiai2014introduction}%
  \BibitemOpen
  \bibfield  {author} {\bibinfo {author} {\bibfnamefont {F.}~\bibnamefont
  {Hiai}}\ and\ \bibinfo {author} {\bibfnamefont {D.}~\bibnamefont {Petz}},\
  }\href@noop {} {\emph {\bibinfo {title} {Introduction to matrix analysis and
  applications}}}\ (\bibinfo  {publisher} {Springer Science \& Business
  Media},\ \bibinfo {year} {2014})\BibitemShut {NoStop}%
\end{thebibliography}%
\end{document}